\documentclass[runningheads]{llncs}

\usepackage[T1]{fontenc}
\usepackage[utf8]{inputenc}
\usepackage[english]{babel}
\usepackage[babel]{csquotes}

\usepackage{amsmath}
\usepackage{amssymb}
\usepackage{mathtools}
\usepackage{stmaryrd}
\DeclareMathAlphabet{\mathpzc}{OT1}{pzc}{m}{it}

\usepackage{wrapfig}
\usepackage{mathcommand}

\usepackage{stmaryrd}
\SetSymbolFont{stmry}{bold}{U}{stmry}{m}{n}
\usepackage{mathrsfs}
\DeclareMathAlphabet{\mathpzc}{OT1}{pzc}{m}{it}
\usepackage{bm}

\DeclareFontFamily{U} {MnSymbolA}{}
\DeclareFontShape{U}{MnSymbolA}{m}{n}{
  <-6> MnSymbolA5 
  <6-7> MnSymbolA6 
  <7-8> MnSymbolA7 
  <8-9> MnSymbolA8 
  <9-10> MnSymbolA9 
  <10-12> MnSymbolA10 
  <12-> MnSymbolA12}{} 
\DeclareFontShape{U}{MnSymbolA}{b}{n}{
  <-6> MnSymbolA-Bold5 
  <6-7> MnSymbolA-Bold6 
  <7-8> MnSymbolA-Bold7 
  <8-9> MnSymbolA-Bold8 
  <9-10> MnSymbolA-Bold9 
  <10-12> MnSymbolA-Bold10 
  <12-> MnSymbolA-Bold12}{} 
\DeclareSymbolFont{MnSyA} {U} {MnSymbolA}{m}{n}
\DeclareMathSymbol{\rsim}{\mathrel}{MnSyA}{160}
\DeclareMathSymbol{\lsim}{\mathrel}{MnSyA}{162}

\usepackage{xspace}

\usepackage{enumerate}

\usepackage{orcidlink}

\usepackage[dvipsnames,table]{xcolor}
\usepackage{float}
\usepackage{subfig}
\usepackage{caption}

\usepackage{tikz}
\usetikzlibrary{
  automata,
  cd,
  positioning,
  decorations.pathreplacing,
  calligraphy,
  calc,
  bbox
}
\tikzset{
  >=stealth,
  accepting distance=0.3cm,
  initial text=,
  initial distance=0.3cm,
  every state/.style = {minimum size=0.4cm}
}
\tikzcdset{arrow style=math font}

\usepackage{tabularx}

\usepackage{hyperref}
\usepackage{color}

\newcommand{\alp}{\Sigma}
\newcommand{\alpp}{\Gamma}
\newcommand{\wds}[1][\alp]{{#1}^{*}}
\newcommand{\ltr}{\sigma}
\newcommand{\e}{\epsilon}

\NewDocumentCommand{\eqCl}{om}{
\IfNoValueTF{#1}
{[#2]}
{[#2]_{#1}}
}

\LoopCommands{\lettersUppercase}[#1cal]{\declaremathcommand{#2}{\mathcal{#1}}}
\LoopCommands{\lettersUppercase}[#1bb]{\declaremathcommand{#2}{\mathbb{#1}}}

\newcommand{\NFA}{\textup{\textrm{NFA}}\xspace}
\newcommand{\DFA}{\textup{\textrm{DFA}}\xspace}
\newcommand{\SST}{\textup{\textrm{aSST}}\xspace}

\newcommand{\eg}{\textit{e.g.~}}
\newcommand{\ie}{\textit{i.e.~}}

\newcommand{\resp}{resp.~}
\newcommand{\wrt}{w.r.t.~}

\newcommand{\set}[2][]{#1\{ #2 #1\}}
\newcommand{\setWCnd}[3][]{#1\{ #2 \,#1\lvert\, #3 #1\}}
\newcommand{\intInterv}[3][]{#1\llbracket #2, #3 #1\rrbracket}
\newcommand{\card}[2][]{#1\lvert #2 #1\lvert}
\newcommand{\lenWd}[2][]{#1\lvert #2 #1\lvert}

\newcommand{\lanNFA}[2][]{#1\llbracket #2 #1\rrbracket}
\newcommand{\fctTrn}[2][]{#1\llbracket #2 #1\rrbracket}
\newcommand{\relTrn}[2][]{#1\llbracket #2 #1\rrbracket}

\newcommand{\fctBim}[2][]{#1\llbracket #2 #1\rrbracket}
\newcommand{\fctABim}[2][]{#1\llbracket #2 #1\rrbracket}
\newcommand{\fctSST}[2][]{#1\llbracket #2 #1\rrbracket}

\newcommand{\rat}[2][]{\mathrm{Rat}#1( #2 #1)}

\newcommand{\fct}[2]{\colon #1 \to #2}

\newcommand{\dist}[3][]{#1\lVert #2, #3 #1\rVert}

\newcommand{\oFct}{\mathpzc{o}}
\newcommand{\iniOFct}{\mathpzc{i}}
\newcommand{\finOFct}{\mathpzc{f}}

\newcommand{\varSet}{\mathcal{X}}
\newcommand{\expr}[2][]{\mathrm{Exp}#1( #2 #1)}
\newcommand{\subs}[2][]{\mathrm{Sub}#1( #2 #1)}
\newcommand{\valu}[2][]{\mathrm{Val}#1( #2 #1)}

\newcommand{\outFctSST}{\gamma}

\newcommand{\sttTrSST}{\delta^{\mathrm{s}}}
\newcommand{\regUpSST}{\delta^{\mathrm{r}}}

\newcommand{\rSST}{k}
\newcommand{\sSST}{n}
\newcommand{\rCRA}{\rSST}
\newcommand{\sCRA}{\sSST}

\newcommand{\lOutFctBim}{\lambda}
\newcommand{\rOutFctBim}{\rho}
\newcommand{\outFctBim}{\omega}
\newcommand{\lIni}{l_{\mathrm{i}}}
\newcommand{\rFin}{r_{\mathrm{f}}}

\renewcommand{\emptyset}{\varnothing}

\newcommand{\regOut}{X_{\mathrm{out}}}

\newcommand{\congDFA}[1]{\sim_{#1}}
\newcommand{\DFACong}[1]{\Acal_{#1}}

\NewDocumentCommand{\outSeqTr}{o}{
    \IfNoValueTF{#1}
    {*}
    {*_{#1}}
}
\DeclareMathOperator{\dom}{dom}

\newcommand{\PbRegMin}{\textup{\textsc{Reg Min}}\xspace}
\newcommand{\PbFARegMin}{\textup{\textsc{FA-Reg Min}}\xspace}
\newcommand{\PbSttRegMin}{\textup{\textsc{Stt-Reg Min}}\xspace}

\newcommand{\p}{\textup{\textsc{P}}\xspace}
\newcommand{\np}{\textup{\textsc{NP}}\xspace}
\newcommand{\npc}{\textup{\textsc{NP-complete}}\xspace}

\newcommand{\ptime}{\textup{\textsc{PTime}}\xspace}

\newcommand{\pspace}{\textup{\textsc{PSpace}}\xspace}

\newcommand{\trn}[2]{#1 \left|\,\begin{array}{@{}l@{}}#2\end{array}\right.}

\NewDocumentCommand{\tikzFig}{mmo}{\IfNoValueTF{#3}{
    \begin{figure}
      \centering\input{./figures/tikz/#1.tex}\caption{#2}\label{fig:#1}
    \end{figure}
  }{
    \begin{figure}
      \centering\input{./figures/tikz/#1.tex}\caption{#2}\label{fig:#3}
    \end{figure}
  }}

\usepackage[T1]{fontenc}
\usepackage{graphicx}

\newsavebox{\largest}

\begin{document}
\title{Minimizing Streaming String Transducers:\\ An algebraic approach}
%
%
\author{
  Yahia Idriss Benalioua\inst{1}\orcidlink{0009-0003-3980-1315}
  \and
  Nathan Lhote\inst{2}\orcidlink{0000-0003-3303-5368}
  \and
  Pierre-Alain Reynier\inst{2}\orcidlink{0009-0008-4345-704X}
}
\authorrunning{Y. I. Benalioua et al.}
%
\institute{
  University of Warsaw, Poland \email{y.benalioua@uw.edu.pl}
  \and
  Aix Marseille Univ, CNRS, LIS, Marseille, France \email{\{nathan.lhote,pierre-alain.reynier\}@lis-lab.fr}
}
\maketitle              

\begin{abstract}
In this work, we study minimization of rational functions
given as appending streaming string transducers (\SST for short). 
We rely on 
an algebraic presentation of these functions,
known as bimachines, to address
 the minimization of both states and registers of \SST.
  
First, we show a bijection between a subclass of \SST and bimachines, 
which maps the numbers of states and registers of the \SST to two natural parameters of the bimachine. Using known results on the minimization
of bimachines, this yields a \ptime (\resp \np) 
procedure to minimize this subclass of \SST
with respect to registers (\resp both states and registers). 
In a second step, we introduce a new model of bimachines, 
named asynchronous bimachines, which allows to 
lift the bijection to the whole class of \SST.
Based on this, we prove that register minimization with a fixed 
underlying automaton is \npc.

\keywords{Rational functions  \and Streaming String Transducers \and Bimachines \and Minimization.}
\end{abstract}
\section{Introduction}%
\label{sec:intro}

Important connections between computation, mathematical logic, and algebra have been
established for regular languages of finite words. A language is regular 
iff it is recognized by a finite automaton, iff it is definable in
monadic second-order logic (MSO) with one successor~\cite{Bu60,Elg61,tra61short}, and iff its syntactic monoid, a canonical monoid attached
to every language, is finite (see for instance~\cite{str94}). While automata are well-suited to study the algorithmic properties of regular languages, the algebraic view has
provided effective characterizations of regular languages and its subclasses. Most notably,
the problem of deciding whether a regular language is first-order definable amounts to
checking whether its syntactic monoid, which is computable from any finite automaton
recognizing the language, is aperiodic, which is decidable.

At the computational level, word functions are defined by (one-way) transducers, which extend
automata with outputs on their transitions. The non-determinism of automata may result in
mapping an input to different output words. This yields the class of rational relations, long studied in the literature, but for which numerous problems are undecidable, starting with equivalence. When the relation is functional, we obtain the class
of \emph{rational functions}~\cite{BerstelB79}, which behaves much better regarding decidability. 

While non-determinism is important for expressiveness, it is not realistic regarding actual 
implementations. An alternative model, known as \emph{streaming string
 transducers}~\cite{AluCer2010,AluCer2011}, has
been introduced to remedy this problem. The underlying automaton is deterministic,
but the model is no longer finite-state: it is equipped with finitely many registers that are
used to store partial output words, and that can be updated during the execution.

For automata based models, a very important problem is to simplify the models. For instance, deterministic machines are essential
in order to derive efficient evaluation algorithms. Similarly, reducing the size of the models allows to reduce the computation time of
most algorithms.
For the class of regular languages, classical algebraic tools such as Myhill-Nerode congruence not only allow to minimize automata, but also yield canonical models, which can be used to derive learning algorithms.
When considering automata models extended with registers, the minimization may be with respect to 
the number of registers, the number of states, or both. For instance, this has been studied
for cost register automata over a field in~\cite{BenLhoRey2024a}.

In this work, we consider the class of rational functions, and are interested in minimization problems for streaming string transducers. In~\cite{DavReyTal2016}, the problem of minimizing the number of registers has been
solved for this class, and shown to be \pspace-complete. More precisely, the approach
is based on the presence of some pattern in a transducer realizing the function, and, equivalently,
on some machine independent property of the function itself.
This result allows to identify the minimal number of registers needed to realize the function,
but comes with a blow-up of the number of states.

Our objective is to follow an algebraic approach to address minimization
of streaming string transducers, \wrt both the number of states and registers.
 Indeed, Reutenauer and Schutzenberger
have introduced in~\cite{ReuSch1991} congruences that allow to derive an algebraic characterization
of rational functions, and a representation of these functions by an alternative model known as
bimachine (the name comes from the fact that it is represented as a pair of automata with outputs, known as a left and a right automaton). Unlike other models, this algebraic characterization does not yield a unique, but a finite family of canonical bimachines.
Our contributions can be summarized as follows: (see also the table)
\begin{enumerate}
\item We prove a strong equivalence between bimachines and a subclass of \SST (denoted $\SST_{\textsf{iffo}}$), which maps the number of registers (\resp states) of the \SST to the size of the right (\resp left) automaton of the bimachine. Using known results on the minimization of bimachines, we can minimize the corresponding subclass of \SST with respect to registers in \ptime, and with respect to states and registers in \np.
\item We introduce a new model of bimachines, named asynchronous bimachines, which allows to lift the previous equivalence to the whole class of \SST. Based on this, we consider register minimization with fixed underlying automaton, and show 
that (a slight generalization of) this problem is \npc.
\end{enumerate}

Due to lack of space, omitted proofs can be found in the Appendix.

\begin{center}
\begin{tabular}{l|c|c|c}
 & \shortstack{Fixed Underlying Automaton\\ Register Minimization} & \shortstack{Register\\ Minimization} & \shortstack{State-Register\\ Minimization} \\
 \hline
\SST & \shortstack{{\bf \np-complete}\\ (for \SST with partial updates)} & \pspace-complete~\cite{DavReyTal2016} & open \\
\hline
$\SST_{\textsf{iffo}}$ &{\bf \ptime }& {\bf \ptime }& {\bf \np}
\end{tabular}
\end{center}

\section{Models of Transducers}%
\label{sec:models}

\paragraph{Words, languages}

An alphabet $ \alp$ is a finite set.
An element $\ltr \in \alp$ is called a \emph{letter} and a finite sequence of
letters is called a \emph{word}.
The length of a word $w$ will be denoted by $ \lenWd{w}$.
The word of length 0 is called the empty word and is denoted by $ \e$.
The number of occurrences of a letter $\ltr$ in a word $w$ will be denoted by
$|w|_\ltr$.

Given two words $w = a_1 \dots a_n$ and $w' = a'_1 \dots a'_m$,
we will denote by $w \cdot w'$ or simply $ww'$ their \emph{concatenation}
$a_1 \dots a_n a'_1 \dots a'_m$ and the concatenation of $w$ with itself $n$
times will be denoted by $w^n$ ($w^0 = \e$ for all $w$).

The set of all words over $\alp$ is denoted by $ \wds$.
Together with concatenation, $\wds$ is a monoid called the free monoid
over $\alp$.
Its neutral element is the empty word $\e$.
Any subset $L \subseteq \wds$ is called a \emph{language}.
The class of rational (or regular) languages over $\alp$ is denoted
$\rat{\wds}$.

Let $u$ and $v$ be two words of $\wds$. We will write $u \leq v$ if $u$ is a prefix of $v$, \ie if there exists $w \in \wds$
  such that $v = uw$.
  We will denote $w$ by $u^{-1}v$.
  The longest common prefix of $u$ and $v$ is denoted by $u \wedge v$.
  We generalize this definition to languages in the natural way and define
  $\bigwedge L \in \wds$ as the longest common prefix of all the words of $L$
  for all nonempty language $L \subset \wds$ and $\bigwedge \emptyset = \e$.

\paragraph{Automata}


A finite automaton $\Acal$ on an alphabet $\alp$ is a tuple $(Q, I, \Delta, F)$
where $Q$ is a finite set of \emph{states}, $I \subseteq Q$ and $F \subseteq Q$
are the sets of \emph{initial} and \emph{final} states respectively and
$\Delta \subseteq Q \times \alp \times Q$ is the set of \emph{transitions}
of the automaton.

A \emph{run} of $\Acal$ on a word $w = a_1 \dots a_n \in \wds$ is a finite sequence
of states $(q_i)_{i \in \intInterv{0}{n}}$ such that $(q_{i-1}, a_i, q_i) \in \Delta$
for all $i \in \intInterv{1}{n}$.
It will be denoted by $q_0 \xrightarrow{w}_\Acal q_n$ or $q_0 \xrightarrow{w} q_n$
when $\Acal$ is clear from the context.

We say that the run is \emph{accepting} if $q_0 \in I$ and $q_n \in F$
and define the language \emph{recognized} by $\Acal$, denoted by $\lanNFA{\Acal}$,
as the set of words for which there exists an accepting run of $\Acal$.

$\Acal$ is called \emph{deterministic} (\resp \emph{complete}) if, for all $p \in Q$
and $\ltr \in \alp$, there exists at most (\resp at least) one state $q \in Q$
such that $(p,\ltr,q) \in \Delta$.
A deterministic automaton is also required to have a single initial state.
We say that $\Acal$ is \empty{codeterministic}  if the mirror automaton of $\Acal$, obtained
by reversing transitions and swapping initial and final states,
is deterministic.
Last, $\Acal$ is \emph{unambiguous} if,
for all $w \in \wds$, there exists at most $1$ accepting run of $\Acal$ on $w$.

In the following, we will use the abbreviations \NFA and \DFA for
\emph{nondeterministic finite automaton}
and \emph{deterministic finite automaton} respectively.

Given a complete \DFA $\Acal$, its transitions induce an action
from words on states, that we denote by $\cdot_\Acal$, which, given a state
$q$ and a word $w$, returns $q \cdot_\Acal w$ the unique state reached from
$q$ by reading the word $w$. 
It also induces a right congruence on $\wds$, denoted by $\congDFA{\Acal}$,
defined by $u \congDFA{\Acal} v$ if and only if $q_0 \cdot_\Acal u = q_0 \cdot_\Acal v$,
where $q_0$ is the initial state of $\Acal$.
Conversely, given a right congruence of finite index $\sim$ on $\wds$ and choosing
a set $F$ of equivalence classes, we can define a \DFA $\DFACong{\sim}$
with $\wds /\sim$ as its set of states, $\eqCl[\sim]{\e}$ as its initial state,
$F$ as its set of final states, and, for all $u \in \wds$ and $\ltr \in \alp$,
$\eqCl[\sim]{u} \cdot_{\DFACong{\sim}} \ltr = \eqCl[\sim]{u\ltr}$.

Given two binary relations $R$ and $R'$ on $\wds$, we say that $R$ \emph{finer}
than $R'$ (or $R'$ is coarser than $R$) whenever, for all $u,v \in \wds$, if
$u R v$ then $u R' v$.
Moreover, a \DFA $\Acal$ is called \emph{finer} than a relation $R$ whenever its associated
congruence $\congDFA{\Acal}$ is finer than $R$.
In particular, $\Acal$ is called \emph{finer} than another \DFA $\Acal'$ whenever
$\congDFA{\Acal}$ is finer than $\congDFA{\Acal'}$.
Intuitively, in this case, there exist a morphism between the states of $\Acal$
and those of $\Acal'$ \enquote{merging} states together.

All the definitions above are adapted as expected when $\Acal$ is codeterministic and co-complete.

\vspace{-.5cm}
\begin{figure}[h]
  \savebox{\largest}{\scalebox{.9}{
\begin{tikzpicture}[auto, initial text=]
  \node[state, accepting below, initial above] at (0,0) (v0) {};
  \node[state, initial] (v1) at (-1.2,0){};
  \node[state, initial right] (v2) at (1.2,0) {};
  \path[->] (v1) edge node[swap] {$\trn{b}{b}$} (v0);
  \path[->] (v1) edge[loop above] node[align=left] {$\trn{a}{b}$} ();
  \path[->] (v1) edge[loop below] node[align=left] {$\trn{b}{b}$} ();
  \path[->] (v2) edge node {$\trn{a}{a}$} (v0);
  \path[->] (v2) edge[loop above] node[align=left] {$\trn{a}{a}$} ();
  \path[->] (v2) edge[loop below] node[align=left] {$\trn{b}{a}$} ();
\end{tikzpicture}
}}
  \centering
  \mbox{}\hfill
  \subfloat[Functional transducer]{
    \usebox{\largest}\label{fig:FST_last}
  }\hfill
  \subfloat[Streaming String Transducer]{
    \raisebox{\dimexpr.5\ht\largest-.5\height}{
      \tikzset{parallel/.style={auto=right,->,
      to path={ let \p1=(\tikztostart),\p2=(\tikztotarget),
          \n1={atan2(\y2-\y1,\x2-\x1)},\n2={\n1+180}
          in ($(\tikztostart.{\n1})!1mm!270:(\tikztotarget.{\n2})$) --
          ($(\tikztotarget.{\n2})!1mm!90:(\tikztostart.{\n1})$) \tikztonodes}}}

 \scalebox{.9}{
\begin{tikzpicture}[auto, node distance=3cm]
  \node[state,
    initial, initial text = $\trn{}{X \coloneqq \e\\ Y \coloneqq \e}$,
    accepting below, accepting text = $X$
  ] (0) {};
  \node[state] (1) [right=of 0, accepting below, accepting text = $Y$] {};

  \path[->]
    (0) edge[parallel] node[align=left] {$\trn{b}{X \coloneqq Xa\\ Y \coloneqq Yb}$} (1)
    (1) edge[parallel] node[align=left] {$\trn{a}{X \coloneqq Xa\\ Y \coloneqq Yb}$} (0)
    (0) edge[loop above] node[align=left] {$\trn{a}{X \coloneqq Xa\\ Y \coloneqq Yb}$} (0)
    (1) edge[loop above] node[align=left] {$\trn{b}{X \coloneqq Xa\\ Y \coloneqq Yb}$} (1)
  ;
\end{tikzpicture}
 }}\label{fig:SST_indep}
  }\hfill\mbox{}\caption{A functional transducer and an equivalent \SST}\label{fig:last}
\end{figure}
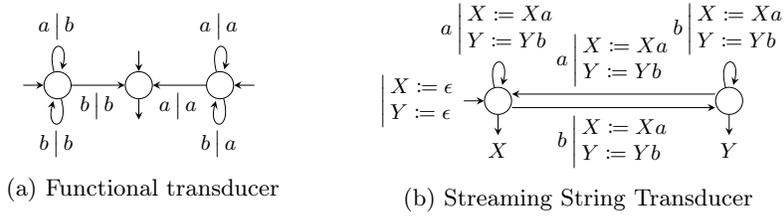

\vspace{-1cm}

\paragraph{Transducers}
  are defined over an input ($\alp$) and an output ($\alpp$) alphabet.
\begin{definition}
%
  A finite-state transducer from $\wds$ to $\wds[\alpp]$ is a tuple $\Tcal =
    (Q, \iniOFct, \oFct, \finOFct)$ where $Q$ is a finite set of \emph{states}
  and $\iniOFct \fct{Q}{\wds[\alpp]}$, $\oFct \fct{Q \times \alp \times Q}{\wds[\alpp]}$ 
  and $\finOFct \fct{Q}{\wds[\alpp]}$ are partial
  functions called respectively the \emph{initial output}, \emph{output}
  and \emph{final output} functions of $\Tcal$.
  \end{definition}

The \emph{underlying automaton} of $\Tcal$ is defined as the \NFA
$\Acal = (Q, I, \Delta, F)$ where $I = \dom(\iniOFct)$, $\Delta = \dom(\oFct)$
and $F = \dom(\finOFct)$.
$\Tcal$ is said \emph{sequential} (\resp \emph{cosequential}) if $\Acal$ 
is a deterministic (\resp codeterministic).

A run of $\Tcal$ on a word $w \in \wds$ is a run
$\rho = (q_i)_{i \in \intInterv{0}{n}}$ of $\Acal$ on $w$ with transitions
$(q_{i-1}, a_i, q_i) \in Q \times \alp \times Q$ for all
$i \in \intInterv{1}{n}$, such that $w = a_1 \cdots a_n$, together with an
\emph{output} word of $\wds[\alpp]$ defined by the concatenation
\[
  \oFct(\rho) = \iniOFct(q_0)\,
  \oFct(q_0, a_1, q_1) \cdots \oFct(q_{n-1}, a_n, q_n)\,
  \finOFct(q_n)
\]

$\Tcal$ \emph{realizes} a relation $ \relTrn{\Tcal} \subseteq \wds \times \wds[\alpp]$
defined by
\[
   \relTrn{\Tcal} = \setWCnd{(u,v) \in \wds \times \wds[\alpp]}{\exists
    \text{ a run } \rho \text{ of } \Tcal \text{ on } u \text{ with output }
    \oFct(\rho) = v}
\]


In particular, when $ \fctTrn{\Tcal}$ is a partial function, $\Tcal$ is called a
\emph{functional} transducer. This defines the class of \emph{rational functions},
as those realizable by functional transducers.
It is worth noting that a function is realizable by a functional transducer if
and only if it is realizable by an unambiguous one (see \eg\cite[Theorem IV.4.2]{Ber1979}).



  \begin{example}\label{ex:FST_last}
    Let $\alp = \alpp = \set{a,b}$.
    The functional transducer depicted on Figure~\ref{fig:FST_last} realizes a 
    function mapping each word $w \in \wds$ to $\ltr^{\lenWd{w}}$ where $\ltr$ 
    is the last letter of $w$ ($\e$ is mapped to itself). 
    It is, in particular, a cosequential transducer.
  \end{example}

\paragraph{Streaming String Transducers}
have originally
been introduced with general updates allowing to capture the
class of regular functions~\cite{AluCer2010}. However, the subclass of
\emph{appending} streaming string transducers (\SST) 
were shown to have the same expressiveness as
rational functions  (see \eg\cite[Theorem 4]{AluDADesRagYua2013}).
Hence, in the following, we will focus on \SST.
We will not elaborate on SST with more general register updates, but the interested
reader can find characterizations of their expressiveness in~\cite{AluCer2010}
and~\cite{FilRey2021} (see also~\cite[Section 5]{FilRey2016}).

\begin{definition}[Appending expressions]
Given a finite set of variables $\varSet$ and an output alphabet
$\alpp$, consider the set of expressions $\expr{\varSet}$ of the form\footnote{We will usually omit the
    $\cdot$ from the expressions and omit the $w$ when $w=\e$.} 
$X\cdot w$, where $X \in \varSet$ and $w \in \alpp^*$.
\end{definition}

A substitution over $\varSet$ is a map $s \fct{\varSet}{\expr{\varSet}}$.
It can be extended into a map from $\expr{\varSet}$ to $\expr{\varSet}$ by substituting
each variable $X$ in the expression given as an input by $s(X)$.
We can compose substitutions by identifying them with their extension.
Valuations are maps of the form $v \colon \varSet \to \alpp^*$, which
can also be extended to $\expr{\varSet}$.
 The set of substitutions will be denoted by $\subs{\varSet}$,
  the set of valuations by $ \valu{\varSet}$ and, given a substitution
  $s$ and a variable $X$, we will write $X \coloneqq e$ whenever $s(X) = e$ for
  some expression $e$.

\begin{definition}[Appending Streaming String Transducer]\label{def:SST}
  An appending streaming string transducer from $\wds$ to $\wds[\alpp]$ is a tuple
  $\Scal = (Q, \varSet, q_0, v_0, \sttTrSST, \regUpSST, \outFctSST)$ where
    $Q$ is a finite set of \emph{states},
    $\varSet$ is a finite set of \emph{registers},
    $q_0 \in Q$ is the \emph{initial} state,
    $v_0 \in \valu{\varSet}$ is the registers' \emph{initial valuation},
    $\sttTrSST \fct{Q \times \alp}{Q}$ is the \emph{transition function},
    $\regUpSST \fct{Q \times \alp}{\subs{\varSet}}$ is the \emph{register update function},
    and $\outFctSST \fct{Q}{\expr{\varSet}}$ is a partial \emph{output function}.
\end{definition}

We define the \emph{underlying automaton} of $\Scal$ as the \DFA $\Acal$ with
the same set of states, initial state and transition function as $\Scal$.
Its set of final states is $\dom(\outFctSST)$.

The runs of $\Scal$ will be runs of $\Acal$ together with valuations mapping
each register to its current value.
\ie we define the configurations of $\Scal$ as pairs $(q,v) \in Q \times \valu{\varSet}$.
Given a word $w = a_1 \dots a_n \in \wds$, the run of $\Scal$ on $w$ is the
sequence of configurations $(q_i,v_i)_{i \in \intInterv{0}{n}}$
where $q_0$ is the initial state, $v_0$ is the initial valuation and, for all
$i \in \intInterv{1}{n}$, $q_i = \sttTrSST(q_{i-1}, a_i)$ and
$v_i = v_{i-1} \circ \regUpSST(q_{i-1},a_i)$.
The function $\fctSST{\Scal} \fct{\wds}{\wds[\alpp]}$ \emph{realized} by
$\Scal$ is defined by
$\fctSST{\Scal}(w) = v_n (\outFctSST(q_n))$.

\begin{lemma}[\cite{AluDADesRagYua2013}]
  Let $f \fct{\wds}{\wds[\alpp]}$ be a partial function.
$f$ is a rational function iff it can be realized by an \SST.
\end{lemma}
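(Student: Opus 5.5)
Two directions. For \SST$\Rightarrow$rational, I will build a functional (in fact unambiguous) transducer that guesses, at each position, which register's content will end up in the output. For rational$\Rightarrow$\SST, I will start from an unambiguous transducer (which exists for every rational function, as recalled after the definition of functional transducers) and run a subset construction in which each reachable state carries its own register.

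\textbf{\SST to transducer.} Let $\Scal=(Q,\varSet,q_0,v_0,\sttTrSST,\regUpSST,\outFctSST)$. Every update has the form $X\coloneqq Y w$, so the final value of a register is determined by a backward chain.
\begin{itemize}
\item Follow the output expression $\outFctSST(q_n)=X_n u$ backwards through the updates. This yields a unique sequence of registers $X_0,\dots,X_n$ with $\regUpSST(q_{i-1},a_i)(X_i)=X_{i-1}w_i$.
\item The output is then $v_0(X_0)\,w_1\cdots w_n\,u$.
\item Define a transducer with states $Q\times\varSet$, initial output $\iniOFct(q_0,X)=v_0(X)$, and transitions $(p,Y)\xrightarrow{a\mid w}(q,X)$ whenever $q=\sttTrSST(p,a)$ and $\regUpSST(p,a)(X)=Yw$.
\item Set the final output $\finOFct(q,X)=u$ when $\outFctSST(q)=Xu$.
\end{itemize}
An accepting run on $w$ is determined by its last register: the $Q$-component is forced by determinism, and the register chain is forced backwards. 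Hence the transducer is codeterministic in its second component, all accepting runs on $w$ produce the same output, and it realizes $\fctSST{\Scal}$.

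\textbf{Transducer to \SST.} Let $\Tcal=(P,\iniOFct,\oFct,\finOFct)$ be unambiguous and trimmed.
\begin{itemize}
\item Registers are $\varSet=\{X_p\mid p\in P\}$.
\item The \SST states are subsets $S\subseteq P$, namely the states reachable by some run on the prefix read so far. The initial state is $\dom(\iniOFct)$, with $v_0(X_p)=\iniOFct(p)$.
\item Invariant: after reading $u$, if $p\in S$ then $X_p$ holds the output of a run from an initial state to $p$ on $u$.
\item On letter $a$ from $S$, move to $S'=\{q\mid \exists p\in S,(p,a,q)\in\dom(\oFct)\}$. For each $q\in S'$, pick a canonical predecessor $p\in S$ (say the least in a fixed order) and set $X_q\coloneqq X_p\,\oFct(p,a,q)$. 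Registers of states outside $S'$ are left unchanged.
\item The output function is $\outFctSST(S)=X_p\,\finOFct(p)$ for a chosen $p\in S\cap\dom(\finOFct)$, undefined if this set is empty.
\end{itemize}

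\textbf{Correctness.} The only point needing care is that these arbitrary choices of predecessor do not change the final result. They do not, because $\Tcal$ is functional: every accepting run on $w$ outputs $f(w)$.
\begin{itemize}
\item By induction, $X_q$ holds the output of \emph{some} run to $q$ on the current prefix, and that run is consistent with the chosen predecessors.
\item At the end, the chosen $p\in S\cap\dom(\finOFct)$ completes that run to an accepting run on $w$.
\item So the output equals $f(w)$, by functionality. Unambiguity is not even strictly needed for this.
\end{itemize}
Conversely, the domain is correct: $w\in\dom f$ iff some accepting run exists, iff $S\cap\dom(\finOFct)\neq\emptyset$.

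The main obstacle is this well-definedness argument. One must remember that a partial run reaching $q$ may not extend to an accepting run, so the values held in registers are meaningful only at the end. I would handle it by stating the invariant in the existential form above and invoking functionality only at acceptance.
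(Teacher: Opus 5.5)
Your proof is correct and follows the route the paper sketches (the paper itself only cites \cite{AluDADesRagYua2013}). One direction is a subset construction with one register per state of the functional transducer, and you rightly note that functionality alone justifies the arbitrary choice of predecessor. The other direction is a transducer on $Q\times\varSet$ that follows register flows backwards; it is unambiguous and is essentially the product of the left and right automata used in the paper's proof of Theorem~\ref{thm:caracSST}. The only loose end is minor: registers $X_p$ with $p\notin\dom(\iniOFct)$ just need some arbitrary initial value.
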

Intuitively, from a transducer to an \SST, one uses one register per state
and performs a classical subset construction. Conversely, for each register,
one builds a run that simulates its value. This leads to an unambiguous transducer.

We will consider an extension of this model, in which
we allow the register update function to output
partial substitutions, \ie partial maps from $\varSet$ to $\expr{\varSet}$.
We call these machines \emph{\SST with partial updates}.
The semantics of \SST can be lifted to this model by considering
partial valuations.

\begin{example}
  The \SST depicted on Figure~\ref{fig:SST_indep} realizes the same function as 
  the transducer of Example~\ref{ex:FST_last} depicted on Figure~\ref{fig:FST_last}.
\end{example}

\paragraph{Problems considered}

The first problem we consider, which has been studied before
in~\cite{DavReyTal2016},
is the register minimization problem. It can be stated as follows:

\noindent
\textbf{Problem:} Register Minimization (\PbRegMin for short)\\
\noindent \textit{Input:} a rational function $f$ and $k\in \mathbb{N}$\\
\noindent \textit{Question:} Does there exist an \SST $\Scal$ that realizes $f$ and has $k$ registers?

\smallskip
A more constrained version, which has been studied in~\cite{BenLhoRey2024a} for
Cost-register automata over a field, can be defined as follows:

\noindent
\textbf{Problem:} State-Register Minimization (\PbSttRegMin for short)\\
\noindent \textit{Input:} a rational function $f$ and $k,n\in \mathbb{N}$\\
\noindent \textit{Question:} Does there exist an \SST $\Scal$ that realizes $f$ and has $k$ registers and $n$ states?

\smallskip Let $f$ be a rational function.
The \emph{register complexity} (\resp \emph{state complexity})
of $f$ is the minimal
number of registers (\resp states) needed for an \SST to realize it.
The \emph{state-register complexity} of $f$ is the set of pairs $(n,k)$ such that
there exists an \SST with $n$ states and $k$ registers realizing $f$, and such that any \SST realizing $f$ with $n'$ states and $k'$ registers with 
$(n,k)\neq (n',k')$ has either strictly more states ($n'>n$) or strictly more registers
($k'>k$).


\section{Algebraic Toolbox}%
\label{sec:algebra}
Let us now consider functional transducers from $\wds$ to $\wds[\alpp]$ for two
finite alphabets $\alp$ and $\alpp$, and introduce the tools that allow to obtain 
canonical and minimal representations of the functions they realize.

\paragraph{Bimachines}
Introduced by Schützenberger in~\cite{Sch1961b} and named
by Eilenberg in~\cite{Eil1974}, \emph{bimachines} read their input simultaneously 
from left to right using their left automaton and in the opposite direction using 
their right automaton and produces the output based on the states they reach during 
the run. 
They realize the same functions as functional transducers and are defined more 
formally as follows:

\begin{definition}[Bimachine]\label{def:bim}
    A bimachine on two finite alphabets $\alp$ and $\alpp$ is a tuple
  $\Bcal = (\Lcal, \Rcal, \lOutFctBim, \outFctBim, \rOutFctBim)$ where
    $\Lcal = (L, \lIni, \delta_\Lcal, F_\Lcal)$ is the \emph{left
            automaton} of $\Bcal$.
          It is a deterministic finite automaton.
    Dually, $\Rcal = (R, I_\Rcal, \delta_\Rcal, \rFin)$ is its
          \emph{right automaton}.
          It is a codeterministic finite automaton.
    $\lOutFctBim \fct{I_\Rcal}{\wds[\alpp]}$, $\outFctBim
            \fct{L \times \alp \times R}{\wds[\alpp]}$, and
          $\rOutFctBim \fct{F_\Lcal}{\wds[\alpp]}$ are respectively its
          \emph{left output}, \emph{output} and \emph{right output} functions.
\end{definition}

Defining $\outFctBim (l, \e, r) = \e$ and using the identity 
$\outFctBim (l, uv, r) = \outFctBim(l,u, v \cdot_\Rcal r)\cdot \outFctBim(l \cdot_\Lcal u, v,r)$,
for all $u,v \in \wds$, $l \in L$ and $r \in R$, $\outFctBim$ is extended to words by induction. 
The function $ \fctBim{\Bcal} \fct{\wds}{\wds[\alpp]}$ \emph{realized} by
$\Bcal$ is then defined, for all $w \in \wds$, by
$   \fctBim{\Bcal}(w) = \lOutFctBim(w \cdot_\Rcal \rFin)\, \outFctBim (\lIni, w, \rFin)
  \, \rOutFctBim(\lIni \cdot_\Lcal w)
$.
See Appendix~\ref{apx:ex-bimachine} for an example.

Any functional transducer can be converted into a bimachine and vice
versa~\cite[Theorem XI.7.1]{Eil1974} (see also~\cite[Theorem IV.5.1]{Ber1979}).

\begin{theorem}
  A partial function $f \fct{\wds}{\wds[\alpp]}$ is rational
  if and only if there exists a bimachine $\Bcal$ such that $\fctBim{\Bcal} = f$.
\end{theorem}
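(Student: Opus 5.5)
Both directions go through functional transducers, which define the rational functions.

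\emph{From a bimachine to a transducer.} Let $\Bcal = (\Lcal, \Rcal, \lOutFctBim, \outFctBim, \rOutFctBim)$. Take the product state set $L \times R$. A transition $(l, r) \xrightarrow{\ltr} (l', r')$ exists when $l' = l \cdot_\Lcal \ltr$ and $r = \ltr \cdot_\Rcal r'$, and it outputs $\outFctBim(l, \ltr, r')$.

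The initial states are the pairs $(\lIni, r)$ with $r \in I_\Rcal$, with initial output $\lOutFctBim(r)$. The final states are the pairs $(l, \rFin)$ with $l \in F_\Lcal$, with final output $\rOutFctBim(l)$.

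Since $\Lcal$ is deterministic, the left component of any run on $w$ is forced. Since $\Rcal$ is codeterministic, reading backwards from $\rFin$ forces the right component. So there is at most one accepting run. Unfolding the inductive definition of $\outFctBim$ on words shows that its output is exactly $\fctBim{\Bcal}(w)$. The transducer is therefore unambiguous, hence functional, and realizes $\fctBim{\Bcal}$.

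\emph{From a rational function to a bimachine.} Start from an unambiguous transducer $\Tcal = (Q, \iniOFct, \oFct, \finOFct)$ realizing $f$; this is possible by the remark citing \cite{Ber1979}.

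The left automaton $\Lcal$ is the subset construction: its state after reading $u$ is the set of states reachable from $I$ by $u$. The right automaton $\Rcal$ is the reverse subset construction: its state before reading $v$ is the set of states from which $F$ is reachable by $v$. Both are finite, and by construction $\Lcal$ is deterministic and $\Rcal$ codeterministic.

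Now fix a position with left state $P = \lIni \cdot_\Lcal u$, letter $\ltr$, and right state $S = v \cdot_\Rcal \rFin$, where $w = u\ltr v$. By unambiguity, there is at most one transition $(p, \ltr, q)$ with $p \in P$ and $q \in S$; it is the one used by the unique accepting run at that position. Set $\outFctBim(P, \ltr, S) = \oFct(p, \ltr, q)$.

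For the boundary outputs:
\begin{itemize}
\item $\lOutFctBim(S) = \iniOFct(q)$ for the unique initial $q \in S$;
\item $\rOutFctBim(P) = \finOFct(p)$ for the unique final $p \in P$.
\end{itemize}
Where no such state exists, the function is left undefined. This handles the domain of $f$.

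\emph{Main obstacle.} The delicate step is justifying uniqueness of the transition $(p, \ltr, q)$. Two distinct such transitions would extend, via prefix runs from $I$ and suffix runs to $F$, to two distinct accepting runs on $w$. The same argument gives uniqueness of the initial state in $S$ and the final state in $P$.

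One must also check that the output is well defined independently of $w$, i.e.\ that it depends only on $(P, \ltr, S)$. It does, provided we restrict to triples realized by some word: any realizing word produces the same argument, and the output does not depend on which word is chosen. On unrealized triples $\outFctBim$ can be set arbitrarily, for instance to $\e$, since they never occur in a computation.

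Finally, concatenating the per-position outputs along $w$ yields the output of the unique accepting run of $\Tcal$, which is $f(w)$. Trimming unreachable states of $\Tcal$ beforehand simplifies these arguments.
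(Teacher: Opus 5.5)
Your proof is correct. The product of $\Lcal$ and $\Rcal$ gives an unambiguous transducer realizing $\fctBim{\Bcal}$. Conversely, the forward and backward subset automata of an unambiguous transducer give a bimachine realizing $f$, with $\outFctBim(P,\ltr,S)$ read off the unique transition linking $P$ to $S$, and with $\lOutFctBim$ and $\rOutFctBim$ defined only on subsets that meet the initial and final states respectively. The paper gives no proof of its own here: it cites \cite[Theorem XI.7.1]{Eil1974} and \cite[Theorem IV.5.1]{Ber1979}, and the standard argument behind those results is essentially the construction you describe.
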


In contrast with \DFA, rational functions does not always have a unique minimal 
transducer realizing them.
However, Reutenauer and Schützenberger showed in~\cite{ReuSch1991} that they can 
still be represented in a canonical way using bimachines. 

\paragraph{Syntactic congruences}

Given a language $L\subseteq \wds$, and a word $u\in \wds$,
we define the left residual of $L$ \wrt $u$ as
the language $u^{-1}L = \{v\in \wds \mid uv \in L\}$.
Right residuals are defined symmetrically.
We define the equivalence relation $\sim_L$ as
$u\sim_L v $ iff $u^{-1}L = v^{-1}L$.
It is well known that $\sim_L$ is a right congruence, called the \emph{Myhill-Nerode congruence} of $L$,
and that  $L$ is a regular language iff
$\sim_L$ has finite index $k$. In addition, when this holds, 
the minimal \DFA recognizing $L$
has precisely $k$ states.
We consider the distance on words defined, for all $u,v \in \wds$, by $\dist{u}{v}
  = \lenWd{u} + \lenWd{v} -2 \lenWd{u \wedge v}$.
Two words are close for this distance if they share a long common prefix.
Based on this distance, congruences characterizing rational transductions were 
introduced in~\cite{ReuSch1991}.

\begin{definition}\label{def:synConBim}
  Let $f \fct{\wds}{\wds[\alpp]}$ be a partial function.

  The \emph{left syntactic congruence} $\lsim_f$ of $f$ is defined,
  for all $u,v \in \wds$, by $u \lsim_f v$ if and only if the two following conditions
  are verified:
    $\dom(f)u^{-1} = \dom(f)v^{-1}$\label{itm:recDomBim}
    and $\sup_{w \in \dom(f)u^{-1}} \dist{f(wu)}{f(wv)} < \infty$\label{itm:sameOutBim}.

  The \emph{right syntactic congruence} $\rsim_f$ of $f$ is defined symmetrically,
  using a distance based on the longest common suffix.
\end{definition}


\begin{theorem}[\cite{ReuSch1991}]\label{thm:caracRatFct}
  Let $f \fct{\wds}{\wds[\alpp]}$ be a partial function.
  The following conditions are equivalent:
  \begin{enumerate}[(1)]
    \item $f$ is a rational transduction
    \item $\lsim_f$ has a finite index and $f^{-1}(L) \in \rat{\wds}$ for all
          $L \in \rat{\wds[\alpp]}$.
    \item $\rsim_f$ has a finite index and $f^{-1}(L) \in \rat{\wds}$ for all
          $L \in \rat{\wds[\alpp]}$.
  \end{enumerate}
\end{theorem}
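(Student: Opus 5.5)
We prove $(1)\Rightarrow(2)$, then $(2)\Rightarrow(1)$; the equivalence with $(3)$ follows by the mirror argument, exchanging prefixes with suffixes, deterministic with codeterministic, and $\lsim_f$ with $\rsim_f$. Note that $\lsim_f$ compares outputs of $wu$ and $wv$ for a common left context $w$, with the distance measured by longest common prefix.

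\emph{$(1)\Rightarrow(2)$.} Preimages are handled by the standard product construction. Take a functional transducer $\Tcal$ for $f$ and a \DFA for $L$. Run the \DFA on the outputs of $\Tcal$, recording its state in the states of $\Tcal$. This gives an \NFA accepting $f^{-1}(L)$.

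For finite index, take a bimachine $\Bcal=(\Lcal,\Rcal,\lambda,\omega,\rho)$ realizing $f$, given by the previous theorem. Make $\Rcal$ co-complete and let $\sim$ be the left congruence it induces, i.e.\ $u\sim v$ iff $u\cdot_\Rcal r=v\cdot_\Rcal r$ for all $r\in R$, or at least for $r=\rFin$ together with the generated transformation. This congruence has finite index, so it suffices to show that it is finer than $\lsim_f$. Suppose $u\sim v$. Both $\dom(f)u^{-1}$ and $\dom(f)v^{-1}$ are determined by the right state reached on the suffix and by acceptance, so they coincide. To refine domain equality properly, intersect $\sim$ with the syntactic left congruence of $\dom(f)$, which is still of finite index. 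For $w$ in the common residual, the run gives
\[
f(wu)=\lambda(\cdot)\,\omega(\lIni,w,r_u)\,\omega(l,u,\rFin)\,\rho(l\cdot u),\qquad r_u=u\cdot_\Rcal\rFin=r_v .
\]
Since $r_u=r_v$, the prefix $\lambda\,\omega(\lIni,w,r_u)$ is common to $f(wu)$ and $f(wv)$. The remaining suffixes have length at most $M(|u|+|v|+2)$, where $M$ bounds the output lengths. The distance is therefore bounded uniformly in $w$.

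\emph{$(2)\Rightarrow(1)$.} This is the main obstacle, since we must build a machine from the congruence. We use the left automaton $\Acal_{\lsim_f}$, codeterministic and reading from the right, as the right automaton of a bimachine. Equivalently, we build a cosequential transducer for a helper function and compose it with a sequential one. For each class $c$ of $\lsim_f$ and each $w$, consider the outputs $f(wu)$ for $u\in c$. These share a common part up to bounded divergence.

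We plan to define $g(w,c)=\bigwedge\{f(wu): u\in c,\ wu\in\dom f\}$, the longest common prefix. Reading $x=a_1\cdots a_n$ left to right, the left automaton must track enough to compute the increments $g(a_1\cdots a_{i-1},c_{i})^{-1}g(a_1\cdots a_i,c_{i+1})$, where $c_i$ is the class of the suffix. The key lemma to prove is that these increments have bounded length and depend only on finitely many parameters. This should follow from the finite index of $\lsim_f$ and the uniform bound in Definition~\ref{def:synConBim}, via a pumping argument on $w$, which shows the bounds can be taken uniform over the finitely many classes.

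The delayed output $g(w,c)^{-1}f(wu)$ should take finitely many values per class. Rationality of preimages is used here: with it, the relevant finite information, namely the residual delays and the domain, is recognized by a finite left automaton, since each delay value yields a regular set of $w$. Assembling the increments as $\omega$ and the delays as $\rho$ gives a bimachine, and the earlier theorem converts it into a transducer, establishing $(1)$.
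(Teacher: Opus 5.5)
The paper does not prove this theorem; it quotes it from Reutenauer and Sch\"utzenberger. Your proposal therefore has to stand on its own.

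\textbf{What works.} Your direction $(1)\Rightarrow(2)$ is correct. The relation $u\cdot_\Rcal\rFin=v\cdot_\Rcal\rFin$, intersected with the left residual congruence of the regular set $\dom(f)=f^{-1}(\alpp^*)$, has finite index and is finer than $\lsim_f$ by exactly your estimate. Mirroring then gives $(3)$.

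\textbf{The gap in $(2)\Rightarrow(1)$.} Your bimachine is the right object: right automaton $\DFACong{\lsim_f}$, outputs given by the increments $g(w,[au])^{-1}g(wa,[u])$. But the only nontrivial statement is merely announced. That statement is that finitely many left states suffice, i.e.\ that these increments and the final delays are bounded and have regular level sets in $w$. The justification you sketch does not deliver it, for three reasons.
\begin{enumerate}
\item Definition~\ref{def:synConBim} contains no uniform bound. $\sup_w\dist{f(wu)}{f(wv)}$ is finite for each pair $(u,v)$, but a class $c$ is infinite. So ``the $f(wu)$, $u\in c$, share a common part up to bounded divergence'' uniformly in $w$ is precisely the bounded-delay lemma you need, not a consequence of the definition.
\item A pumping argument on $w$ needs a finite automaton reading $w$. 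The only ones at hand recognize sets $f^{-1}(L)u^{-1}$, and choosing the right $L$ is the actual content of the proof.
\item Level sets such as $\{w\mid g(w,c)^{-1}g(wa,c')=x\}$ are not of the form $f^{-1}(L)u^{-1}$. Here $g$ is a longest common prefix over an infinite class, and the condition compares two different outputs. So their regularity does not follow just by invoking the preimage hypothesis.
\end{enumerate}

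\textbf{Why this step is the crux.} Take $\alp=\{a\}$, $\alpp=\{b\}$, and $f(a^n)=b^n$ if $n$ is a square, $b^{n+1}$ otherwise. Then $\lsim_f$ has a single class, and all your increments have length at most $2$. Yet their level sets, e.g.\ $\{a^{k^2-1}\mid k\geq 2\}$, are not regular. Indeed $f$ is not rational, since $f^{-1}((bb)^*)\cap(aa)^*=\{a^{4k^2}\mid k\geq 0\}$. So the preimage hypothesis must be used in an essential way at exactly the step you left open.

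\textbf{How to close it.} Make the two-step decomposition explicit and invoke Choffrut's characterization of subsequential functions (bounded variation plus rational inverse images).
\begin{itemize}
\item Label $x=a_1\cdots a_n$ by $\ell(x)=(a_1,[a_2\cdots a_n])\cdots(a_n,[\e])$. This is a cosequential letter-to-letter map computed by $\DFACong{\lsim_f}$. Put $f'(\ell(x))=f(x)$.
\item Bounded variation: if the longest common prefix of $\ell(wu)$ and $\ell(wv)$ covers the positions of $w\neq\e$, the labels at the last position of $w$ give $u\lsim_f v$. Hence bounded variation of $f'$ reduces to a maximum of $\sup_{w}\dist{f(wu)}{f(wv)}$ over the finitely many pairs with $|u|+|v|\leq k$, plus finitely many pairs when $w=\e$.
\item Rational preimages: $f'^{-1}(L)=\ell(f^{-1}(L))$ is rational.
\end{itemize}
Thus $f'$ is subsequential, and $f=f'\circ\ell$ is rational. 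Choffrut's canonical transducer for $f'$ is essentially your $g$-construction, and its finiteness is exactly the lemma your proposal leaves unproved.
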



\begin{definition}\label{def:leftRightCanAut}
  We define the \emph{canonical right automaton} $\Rcal_f$ of a rational function
  $f$ as the codeterministic automaton $\DFACong{\lsim_f}$ obtained
  from the left syntactic congruence $\lsim_f$ with
  $\setWCnd{\eqCl[\rsim_f]{w}}{w \in \dom(f)}$ as its set of initial states.
  Its \emph{canonical left automaton} $\Lcal_f$ is defined symmetrically using $\rsim_f$.
\end{definition}

\begin{definition}
  A bimachine $\Bcal$, with left and right automata $\Lcal$ and $\Rcal$ respectively,
  realizing a function $f$ is called \emph{minimal} if no other
  bimachine $\Bcal'$, with left and right automata $\Lcal'$ and $\Rcal'$ respectively,
  realizing $f$ is such that $\Lcal$ is finer than $\Lcal'$ and $\Rcal$ is finer than $\Rcal'$.
\end{definition}

For a total function $f$,~\cite{ReuSch1991} shows the existence of a canonical 
minimal bimachine, with $\Rcal_f$ as its right automaton.
Moreover, $\Rcal_f$ is minimal among all the right automata of the bimachines 
realizing $f$.
The same is true for $\Lcal_f$, which allows to define a left canonical bimachine
with a minimal left automaton.
This result was later extended to all rational functions in~\cite{FilGauLho2019}
where the canonical bimachines were shown to be computable in polynomial time.
The paper also provides \ptime algorithms to obtain the minimal right automaton
associated with a given left automaton and vice-versa, allowing the efficient 
minimization of any given bimachine.
We sum up the results relevant to this paper as follows:

\begin{theorem}[\cite{ReuSch1991,FilGauLho2019}]
  Bimachines can be minimized in \ptime.
  Moreover, given a rational function $f$, there exist a computable finite family 
  of minimal bimachines realizing $f$.
\end{theorem}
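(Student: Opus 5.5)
The theorem combines two claims: bimachines can be minimized in polynomial time, and every rational function admits a computable finite family of minimal bimachines. I would not rebuild the machinery of~\cite{ReuSch1991,FilGauLho2019}. Instead I would assemble the proof from three ingredients: the canonical automata $\Lcal_f$ and $\Rcal_f$, the characterization in Theorem~\ref{thm:caracRatFct}, and two \ptime procedures. The first procedure, given a left automaton $\Lcal$ finer than $\rsim_f$, computes the coarsest right automaton $\Rcal$ such that some bimachine with automata $(\Lcal,\Rcal)$ realizes $f$. The second is its mirror image.

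\textbf{Minimization.} Take a bimachine $\Bcal=(\Lcal,\Rcal,\lOutFctBim,\outFctBim,\rOutFctBim)$ realizing $f$.

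First, keep $\Lcal$ fixed and replace $\Rcal$ by the coarsest compatible right automaton. This automaton is given by the congruence $u \equiv v$, holding when $u$ and $v$ have the same domain residuals and, for every left state $l$, the outputs produced from $l$ on $u$ versus $v$ stay within bounded suffix distance. That congruence is coarser than $\congDFA{\Rcal}$, and it can be computed by a fixpoint or partition refinement on the states of $\Rcal$. Bounded distance between outputs of two unambiguous runs reduces to a twinning-style check on a product automaton, which runs in polynomial time.

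Second, fix the new $\Rcal$ and coarsen $\Lcal$ symmetrically.

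After these two passes, no coarsening of either automaton preserves $f$. Otherwise, applying the procedure to the other side would have produced it, because each pass yields the coarsest automaton compatible with the one held fixed. The output functions $\lOutFctBim,\outFctBim,\rOutFctBim$ are then redefined on merged states by normalizing outputs with longest common prefixes and suffixes, using $\bigwedge$. Checking that this normalization is well defined is the main technical obstacle. I would use the invariant from~\cite{FilGauLho2019} that outputs are pushed as far right (respectively left) as the fixed automaton permits.

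\textbf{Finite family.} Every right automaton of a bimachine realizing $f$ is finer than $\rsim_f$, and $\Rcal_f$ itself is one of them. The same holds for left automata and $\Lcal_f$. Any minimal bimachine is a fixpoint of the two-pass procedure above. Its left automaton is therefore determined by its right automaton, and vice versa. I then need to show that the left automaton of a minimal bimachine is sandwiched between $\Lcal_f$ and a fixed finite refinement computable from $f$. A natural candidate for that refinement is the left automaton obtained from $\Rcal_f$ by the procedure, intersected with $\Lcal_f$. Given this bound, only finitely many congruences are possible. Applying the procedure to each candidate and keeping those that are minimal yields the computable finite family. The bound on the left automata of minimal bimachines is the step I expect to be hardest, and I would take it directly from~\cite{ReuSch1991}.
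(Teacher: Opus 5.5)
Your overall architecture is the one the paper attributes to~\cite{FilGauLho2019}: alternately replace one automaton by the coarsest one compatible with the other. The congruence you propose for the first pass is wrong, however, and everything else rests on it. A bounded-distance condition quantified over finitely many left states is the same as the unquantified one, so your relation does not depend on $\Lcal$ at all. With the prefix distance it is exactly $\lsim_f$; the prefix distance is the relevant one, since the common part of $f(wu)$ and $f(wv)$ is the output produced on $w$. With the suffix distance you wrote, it has infinite index in the example below. The coarsest right automaton compatible with $\Lcal$ does depend on $\Lcal$, and is in general strictly finer than $\lsim_f$.

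Take the total function deleting the first letter, $f(\ltr w)=w$ and $f(\e)=\e$. For $w=\ltr w'$ one has $f(wu)=w'u$ and $f(wv)=w'v$, so $\lsim_f$ and $\rsim_f$ are both universal. A one-state left automaton is compatible with some right automaton, because $f$ is cosequential: a right automaton that remembers the letter it has just read and emits it at the next position to the left works. Your first pass would nevertheless return a one-state right automaton. No bimachine with one state on each side realizes $f$: it would compute $w\mapsto\lambda\, h(w)\,\rho$ for a morphism $h$, which $f(a)=\e$ and $f(aa)=a$ rule out.

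The missing idea is that boundedness must be strengthened to an equality. Words $u,v$ may be merged only if, for every left state $l$, the delay between $f(wu)$ and $f(wv)$ is the same for all $w\in\dom(f)u^{-1}$ with $\lIni\cdot_\Lcal w=l$; here the delay is the pair of words left after cancelling $f(wu)\wedge f(wv)$. This condition is necessary. In a bimachine, $f(wu)=P\,x$ and $f(wv)=P\,y$, where $P$ depends on $w$ and the common right state, while $x,y$ depend only on $\lIni\cdot_\Lcal w$, $u$ and $v$. It is also exactly what makes your normalization of outputs by longest common prefixes well defined. It is what must be tested in polynomial time, rather than a twinning-style boundedness check. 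So the ``main obstacle'' you flagged is not a technicality: for your relation the normalization genuinely fails.

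You must also adopt explicitly the convention of~\cite{FilGauLho2019} that both automata recognize $\dom(f)$ (Remark~\ref{rmk:domains}). Under the paper's Definition~\ref{def:bim}, your claim that every right automaton of a bimachine realizing $f$ is finer than the syntactic congruence is false; in the paper's notation that congruence is $\lsim_f$, not $\rsim_f$. For instance, the identity restricted to even-length words is realized with $\Lcal$ counting parity and a one-state $\Rcal$, whereas $\lsim_f$ separates words by the parity of their length. Without the convention, the behaviour of $\Rcal$ outside $\lanNFA{\Lcal}$ is a don't-care. The first pass then becomes a minimization with don't-cares, and a coarsest compatible right automaton is no longer guaranteed to exist.

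With both repairs the rest of your plan goes through, and the step you expected to be hardest is immediate. Refining either automaton of a bimachine preserves the existence of suitable outputs. You already use this fact, without stating it, when you conclude that the two passes yield a minimal bimachine. Now let $(\Lcal,\Rcal)$ be minimal. Then $\Lcal$ is the coarsest left automaton compatible with $\Rcal$, and $\Rcal$ is finer than $\Rcal_f$. Hence the coarsest left automaton $\Lcal^*$ compatible with $\Rcal_f$ is also compatible with $\Rcal$, and is therefore finer than $\Lcal$. So $\Lcal$ lies between $\Lcal^*$ and $\Lcal_f$, with no appeal to~\cite{ReuSch1991}. Intersecting with $\Lcal_f$ is redundant, since $\Lcal^*$ is already finer than $\Lcal_f$.
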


\begin{remark}\label{rmk:domains}
  Unlike Definition~\ref{def:bim} (which is the same as in~\cite{ReuSch1991})~\cite{FilGauLho2019}
  requires from both automata of a bimachine to recognize the domain of the function.
  This is important for minimization problems of the next sections
  (see Appendix~\ref{apx:bimachines-min}).
\end{remark}

\section{Relating Streaming String Transducers and bimachines}%
\label{sec:bimachine}

We start by introducing a subclass of \SST that will allow us to
obtain an equivalence with bimachines.

Let $\alp$ and $\alpp$ be two finite alphabets and let
$\Scal = (Q, \varSet, q_0, v_0, \sttTrSST, \regUpSST, \outFctSST)$
be an \SST from $\wds$ to $\wds[\alpp]$.
By swapping the contents of the registers if necessary, we may assume without
loss of generality that $\Scal$ has a fixed output register.
Meaning that there exists a register $\regOut \in \varSet$ such that, for all $q \in
  \dom(\outFctSST)$, $\outFctSST(q) = \regOut w$ for some $w \in \wds[\alpp]$.
Using the terminology of~\cite{FilKriTri2014}, given a register update
$X \coloneqq Y w$ of $\Scal$, the register $Y$ is said to \emph{flow} to $X$.
We will say that $\Scal$ has \emph{independent flows} if all the flows
of the register updates of its transitions only depend on the letter read rather
than both the letter and the state of its underlying automaton, or, more formally:
\begin{definition}
    An \SST $\Scal = (Q, \varSet, q_0, v_0, \sttTrSST, \regUpSST, \outFctSST)$ is
  said to have independent flows if for all $X \in \varSet$ and $\ltr \in \alp$,
  there exists $Y \in \varSet$ such that for all $q \in Q$, $\regUpSST(q,\ltr)(X) 
  = Yw$ for some word\footnote{Which can depend on the state $q$.} 
  $w \in \wds[\alpp]$.
\end{definition}

\begin{example}
  The \SST depicted on Figure~\ref{fig:SST_indep} has independent flows while the 
  equivalent \SST of Figure~\ref{fig:SST_dep} has a fixed output register but 
  flows depending on its states.
\end{example}

\begin{wrapfigure}{r}{0.55\textwidth}
	\vspace{-.4cm}
    \begin{center}
      \tikzset{parallel/.style={auto=right,->,
      to path={ let \p1=(\tikztostart),\p2=(\tikztotarget),
          \n1={atan2(\y2-\y1,\x2-\x1)},\n2={\n1+180}
          in ($(\tikztostart.{\n1})!1mm!270:(\tikztotarget.{\n2})$) --
          ($(\tikztotarget.{\n2})!1mm!90:(\tikztostart.{\n1})$) \tikztonodes}}}

 \scalebox{.9}{
\begin{tikzpicture}[auto, node distance=3cm]
  \node[state,
    initial, initial text = $\trn{}{X \coloneqq \e\\ Y \coloneqq \e}$,
    accepting below, accepting text = $X$
  ] (0) {};
  \node[state] (1) [right=of 0, accepting below, accepting text = $X$] {};

  \path[->]
    (0) edge[parallel] node[align=left] {$\trn{b}{X \coloneqq Yb\\ Y \coloneqq Xa}$} (1)
    (1) edge[parallel] node[align=left] {$\trn{a}{X \coloneqq Ya\\ Y \coloneqq Xb}$} (0)
    (0) edge[loop above] node[align=left] {$\trn{a}{X \coloneqq Xa\\ Y \coloneqq Yb}$} (0)
    (1) edge[loop above] node[align=left] {$\trn{b}{X \coloneqq Xb\\ Y \coloneqq Ya}$} (1)
  ;
\end{tikzpicture}
 }
      \caption{An \SST with a fixed output register and dependant flows.}
      \label{fig:SST_dep}
    \end{center}
	\vspace{-.8cm}
\end{wrapfigure}
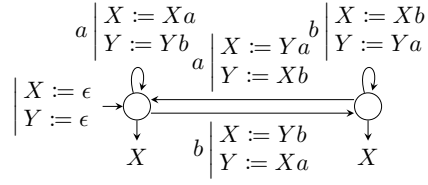
Barring some considerations concerning the domains, this restriction, in conjunction
with the fixed output register assumption, 
allows us to show a one-to-one correspondence between the \SST verifying these
conditions and the bimachines realizing the same transduction.

In the following, we consider the class of \SST that have independent flows, and 
a fixed output register. We denote it by $\SST_{\textsf{iffo}}$.

\begin{proposition}%
\label{prop:SST-bimachines}
Let $f \fct{\wds}{\wds[\alpp]}$ be a total function, and $k,n$ be two integers.

    There exists an $\SST_{\textsf{iffo}}$ that realizes $f$ with $n$ states and $k$ registers
    iff there exists a bimachine that realizes $f$ whose left automaton has $n$ states, and whose right automaton has $k$ states.
\end{proposition}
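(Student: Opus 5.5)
The plan is to prove both directions by explicit, size-preserving constructions: states of the $\SST_{\textsf{iffo}}$ become states of the left automaton, and registers become states of the right automaton. In both directions correctness comes from a single invariant, proved by induction on the input length:
\[
v_u(X_r) \;=\; \lOutFctBim(u \cdot_\Rcal r)\,\outFctBim(\lIni, u, r) \quad\text{for every register/right state } r,
\]
where $v_u$ is the valuation reached after reading $u$. The inductive step is exactly the identity $\outFctBim(l,u\ltr,r)=\outFctBim(l,u,\ltr\cdot_\Rcal r)\,\outFctBim(l\cdot_\Lcal u,\ltr,r)$.

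\emph{From bimachines to $\SST_{\textsf{iffo}}$.} Let $\Bcal=(\Lcal,\Rcal,\lOutFctBim,\outFctBim,\rOutFctBim)$ realize $f$, with $\Lcal$ having $n$ states and $\Rcal$ having $k$ states. Since $f$ is total, I first argue that we may assume $\Lcal$ complete, $\Rcal$ co-complete, $I_\Rcal=R$ and $F_\Lcal=L$. Incompleteness could only remove words from the domain, so completing with sink states that are never used on any input does not change $f$. The extra sink states are the delicate point discussed below.

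The \SST is then defined as follows:
\begin{itemize}
\item its states are $L$, with initial state $\lIni$ and transition function $\delta_\Lcal$;
\item its registers are $\{X_r \mid r\in R\}$, with initial valuation $v_0(X_r)=\lOutFctBim(r)$;
\item its updates are $\regUpSST(l,\ltr)(X_r)=X_{\ltr\cdot_\Rcal r}\,\outFctBim(l,\ltr,r)$;
\item its output function is $\outFctSST(l)=X_{\rFin}\,\rOutFctBim(l)$.
\end{itemize}
The register flowing into $X_r$ on $\ltr$ is $X_{\ltr\cdot_\Rcal r}$, which depends only on $\ltr$, so the flows are independent. The output register $X_{\rFin}$ is fixed. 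Applying the invariant with $r=\rFin$ and appending $\rOutFctBim(\lIni\cdot_\Lcal u)$ gives $\fctSST{\Scal}(u)=\fctBim{\Bcal}(u)$.

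\emph{From $\SST_{\textsf{iffo}}$ to bimachines.} Conversely, let $\Scal$ be an $\SST_{\textsf{iffo}}$ with states $Q$ and registers $\varSet$. The left automaton is the underlying \DFA (states $Q$). The right automaton has state set $\varSet$, and we set $\ltr\cdot_\Rcal X=Y$ where $Y$ is the unique register flowing into $X$ on $\ltr$. This is well defined precisely because flows are independent, and it is codeterministic and co-complete by construction. We take $\rFin=\regOut$ and $I_\Rcal=\varSet$. The output functions are:
\begin{itemize}
\item $\lOutFctBim(X)=v_0(X)$;
\item $\outFctBim(q,\ltr,X)=w$, where $\regUpSST(q,\ltr)(X)=Yw$;
\item $\rOutFctBim(q)=w$, where $\outFctSST(q)=\regOut w$.
\end{itemize}
The same invariant, read in the other direction, gives $\fctBim{\Bcal}=\fctSST{\Scal}$. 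The size correspondence is $|Q|=n$ and $|\varSet|=k$.

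\emph{Main obstacle: domains and completeness.} In the first direction, completing $\Lcal$ or $\Rcal$ would add states and break the exact counts $n$ and $k$. The plan is to show this is never needed when $f$ is total. Every reachable left state must already have a $\rOutFctBim$ value, and all transitions used on some input must exist; unreachable states and undefined transitions can be redirected to existing states arbitrarily, since they never affect any run. Symmetrically, missing right transitions can be redirected to arbitrary existing right states, and undefined $\lOutFctBim$ values can be replaced by $\e$. These choices do not matter because, for total $f$, only the right states actually visited by the run from $\rFin$ on the input influence the output.

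In the second direction, the point to check is that $\outFctSST$ is total on reachable states. This holds because $f$ is total, and unreachable states can be given an arbitrary output. The remaining verifications are routine inductions.
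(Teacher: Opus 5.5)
Your proof is correct and follows the paper's argument essentially verbatim. In both, the left automaton is the underlying \DFA and the right automaton is the flow automaton on the registers, with the same output functions and the same inductive invariant $v_n(X)=\lOutFctBim(w\cdot_\Rcal X)\,\outFctBim(q_0,w,X)$. The only difference is how you handle partiality in the bimachine-to-\SST direction. The paper keeps $\Bcal$ as is, fills undefined initial values and updates with arbitrary defaults ($\e$ and $r\coloneqq r$), and proves $\dom(\fctBim{\Bcal})\subseteq\dom(\fctSST{\Scal})$ with agreement there. You instead first use totality of $f$ to complete $\Lcal$ and co-complete $\Rcal$ without adding states, which has the small bonus that the resulting flows are literally independent at every state.
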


The idea is that the left automaton, corresponds to the underlying 
automaton of the \SST, while the right automaton specifies the flows of the registers
(See Appendix~\ref{apx:flow_indep} for details).
The condition of independence of the flows is required since the transitions of 
the right automaton of a bimachine do not depend on the states reached in its left one.
As a consequence,
we obtain:
\begin{corollary}\label{cor:regCpxSSTIndep}
Let $f$ be a total rational function.
  The register complexity of $f$, relative to the
  class $\SST_{\textsf{iffo}}$, is the number
  of states of the right automaton of a bimachine with a minimal right
  automaton realizing $f$. 
The state-register complexity of $f$, relative to the
  class $\SST_{\textsf{iffo}}$, is the following set:
$$\{(|\Lcal|,|\Rcal|) \mid \Bcal=(\Lcal,\Rcal,\lambda,\omega,\rho) \textup{ is a minimal bimachine realizing } f\}$$
\end{corollary}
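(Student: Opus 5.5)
The corollary follows from Proposition~\ref{prop:SST-bimachines} once we know how the minimal bimachines of~\cite{ReuSch1991,FilGauLho2019} sit among all bimachines realizing $f$. Proposition~\ref{prop:SST-bimachines} says that the set of pairs $(n,k)$ realizable by an $\SST_{\textsf{iffo}}$ for $f$ equals the set $P_f$ of pairs $(|\Lcal|,|\Rcal|)$ with $\Bcal=(\Lcal,\Rcal,\lambda,\omega,\rho)$ a bimachine realizing $f$. So both statements reduce to facts about $P_f$. The register complexity relative to $\SST_{\textsf{iffo}}$ is $\min\{k \mid (n,k)\in P_f\}$. The state-register complexity is the set of Pareto-minimal elements of $P_f$.

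\emph{Register complexity.} By~\cite{ReuSch1991} (extended to all rational functions in~\cite{FilGauLho2019}), the canonical right automaton $\Rcal_f$ is minimal among right automata of bimachines realizing $f$. More precisely, every right automaton $\Rcal$ of a bimachine realizing $f$ is finer than $\lsim_f$, so $|\Rcal|\ge|\Rcal_f|$. Also, some bimachine realizing $f$ has $\Rcal_f$ as its right automaton. Hence $\min\{k\mid (n,k)\in P_f\}=|\Rcal_f|$, which is the number of states of the right automaton of any bimachine with minimal right automaton.

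\emph{State-register complexity.} We must show that the Pareto-minimal elements of $P_f$ are exactly the pairs $(|\Lcal|,|\Rcal|)$ coming from minimal bimachines. Take any bimachine $\Bcal$ realizing $f$. Using the \ptime minimization of~\cite{FilGauLho2019}, we can replace its left and right automata by coarser ones while still realizing $f$, and repeat until the result is minimal. Each coarsening is a quotient, so it does not increase either size. Hence every element of $P_f$ is dominated componentwise by the size pair of some minimal bimachine, and so every Pareto-minimal pair is attained by a minimal bimachine.

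The converse direction is the main obstacle. The displayed set should contain only pairs that are not strictly dominated, yet nothing shown so far prevents a minimal bimachine in one ``branch'' from being dominated in size by a minimal bimachine in another branch. I would try to handle this through the structure of the family given by~\cite{ReuSch1991,FilGauLho2019}: the minimal bimachines are obtained by fixing a left automaton between $\Lcal_f$ and the finest admissible one, then taking its associated minimal right automaton, with the two choices in an antitone correspondence. Under that correspondence, a coarser left automaton forces a finer right automaton. If two minimal bimachines had comparable size pairs, I would expect comparability in the refinement order to follow, contradicting minimality.

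If this monotonicity cannot be established, I would state the corollary honestly in the weaker form: the state-register complexity is the set of minimal elements of the displayed set. That set is still computable from the finite family of minimal bimachines, which is what the algorithmic applications need.
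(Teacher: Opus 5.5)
\textbf{Verdict: the proposal is incomplete.} One inclusion of the second claim is missing, and it cannot be supplied, because that inclusion is false.

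\textbf{What is correct.} You follow the paper's route: translate through Proposition~\ref{prop:SST-bimachines}, then use the bimachine minimization theory. Two parts of your argument are sound:
\begin{itemize}
\item the register-complexity part;
\item the inclusion ``every Pareto-minimal realizable pair is the size pair of a minimal bimachine''.
\end{itemize}
Two small caveats. The fact that every right automaton is finer than the left syntactic congruence needs $\lanNFA{\Rcal}=\dom(f)$, which holds here only because $f$ is total. Also, the displayed set must be read over trim bimachines: minimality depends only on the congruences, so unreachable states would inflate $|\Lcal|$ and $|\Rcal|$ without affecting minimality.

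\textbf{The gap.} The missing step is the reverse inclusion: every size pair of a minimal bimachine is Pareto-minimal. Your suggested fix cannot work. The antitone correspondence between left automata and their coarsest compatible right automata is a statement about refinement. Two incomparable congruences can have strictly ordered indices, so comparable sizes never force comparability in the refinement order.

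\textbf{Counterexample.} Let $\alp=\{a,b,a',b'\}$. Let $\pi_1$ erase all letters outside $\{a,b\}$ and $\pi_2$ erase all letters outside $\{a',b'\}$. Define $f(w)=c^{[P_1(w)]+[P_2(w)]}$, where $[P]\in\{0,1\}$ is the truth value of $P$, $P_1(w)$ says the second letter of $\pi_1(w)$ is $a$, and $P_2(w)$ says the second-to-last letter of $\pi_2(w)$ is $a'$.
\begin{itemize}
\item A bimachine of size $(3,3)$: the left automaton counts $\pi_1$-letters up to $2$ and emits $c$ on an $a$ read in state $1$. The right automaton counts the $\pi_2$-letters of the suffix up to $2$ and emits $c$ on an $a'$ with exactly one $\pi_2$-letter to its right.
\item A bimachine $\Bcal_2$ of size $(4,4)$: the left automaton stores the $a'$-status of the last two $\pi_2$-letters and outputs via $\rho$. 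The right automaton stores the $a$-status of the first two $\pi_1$-letters of the suffix and outputs via $\lambda$.
\end{itemize}

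\textbf{$\Bcal_2$ is minimal.} Take any left congruence coarser than that of $\Bcal_2$ and any right congruence coarser than that of $\Bcal_2$, realizing $f$ together.
\begin{itemize}
\item The coarser left congruence identifies $u$ with $\pi_2(u)$, and the coarser right congruence identifies $u$ with $\pi_1(u)$.
\item Hence on $\{a',b'\}^*$ every suffix is identified with $\e$, so the right run is constant. The machine is then a sequential transducer for $w\mapsto c^{[P_2(w)]}$.
\item The canonical (Choffrut) congruence of that function is exactly the $4$-class congruence of $\Bcal_2$'s left automaton. So on $\{a',b'\}^*$ the coarser left congruence is also finer than $\Bcal_2$'s, hence equal to it. Since both factor through $\pi_2$, they are equal everywhere.
\item Symmetrically, the right congruence equals that of $\Bcal_2$.
\end{itemize}
So $(4,4)$ lies in the displayed set, but it is strictly dominated by the realizable pair $(3,3)$, so it is not in the state-register complexity.

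\textbf{What can be proved.} Your fallback is the correct statement. Every realizable pair is dominated by the size pair of some minimal bimachine. From this fact both inclusions follow: the state-register complexity is exactly the set of Pareto-minimal elements of the displayed set.

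\textbf{Comparison with the paper.} The paper gives no further argument; it simply presents the corollary as a consequence of Proposition~\ref{prop:SST-bimachines}. The obstacle you flagged is therefore a genuine flaw in the stated result, not a missing idea on your side.
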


Since bimachine minimization can be performed in \ptime, as we have seen in
Section~\ref{sec:algebra}, we can then solve the register minimization problem for
this class of \SST by computing a bimachine with a minimal right automaton
and converting it into an \SST using the construction above. Similarly,
in order to solve \PbSttRegMin, one can non-deterministically guess
a bimachine $\Bcal$ of the expected size, check that it is minimal, and that it recognizes $f$.
By definition, the size of $\Bcal$ is polynomial in $n$ and $k$, which we assume are given in unary (as was done in~\cite{DavReyTal2016})
and minimality and equivalence can be checked in \ptime. We obtain:

\begin{theorem}
  \PbRegMin (\resp  \PbSttRegMin) is decidable in \ptime (\resp \np) for $\SST_{\textsf{iffo}}$ realizing total functions.
\end{theorem}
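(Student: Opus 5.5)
The theorem follows from Proposition~\ref{prop:SST-bimachines} and Corollary~\ref{cor:regCpxSSTIndep}, combined with the \ptime bimachine algorithms of~\cite{FilGauLho2019}. Throughout, I assume that $f$ is given as a functional transducer (or bimachine) and that $k,n$ are in unary, as in~\cite{DavReyTal2016}.

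\textbf{\PbRegMin.} First, I convert the input description of $f$ into a bimachine; this is polynomial from an unambiguous transducer by the classical construction, which can be invoked as in~\cite{FilGauLho2019}. Next, I compute in \ptime a bimachine $\Bcal$ realizing $f$ whose right automaton is minimal among all right automata of bimachines realizing $f$; for total $f$ this is the canonical right automaton $\Rcal_f$ built from $\lsim_f$. By Corollary~\ref{cor:regCpxSSTIndep}, its number of states $|\Rcal_f|$ is exactly the register complexity of $f$ relative to $\SST_{\textsf{iffo}}$. The answer is then yes iff $k \geq |\Rcal_f|$. The ``iff'' needs monotonicity: extra registers can always be added, either as dummy registers or by refining $\Rcal$ with unreachable or duplicated states, and Proposition~\ref{prop:SST-bimachines} carries this over. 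If a witness \SST is wanted, the constructive direction of Proposition~\ref{prop:SST-bimachines} produces one from $\Bcal$ in polynomial time.

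\textbf{\PbSttRegMin.} I use guess-and-check. Nondeterministically guess a bimachine $\Bcal=(\Lcal,\Rcal,\lambda,\omega,\rho)$ with $|\Lcal|=n$ and $|\Rcal|=k$. Its size is polynomial in $n$, $k$, $|\alp|$, and the output lengths. The output lengths can be bounded polynomially, because any bimachine realizing $f$ with given automata can be normalized so that its outputs are bounded by those of the input description; this is the one point I would need to verify carefully. Then check in \ptime that $\fctBim{\Bcal}=f$. This works by converting $\Bcal$ to a functional transducer and using polynomial-time equivalence of functional transducers, or equivalently via the algorithms of~\cite{FilGauLho2019}. By Proposition~\ref{prop:SST-bimachines}, an accepting guess exists iff some $\SST_{\textsf{iffo}}$ with $n$ states and $k$ registers realizes $f$. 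The minimality check mentioned in the text is not needed for the decision question itself, only for enumerating the state-register complexity set of Corollary~\ref{cor:regCpxSSTIndep}, and it too is in \ptime by~\cite{FilGauLho2019}.

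\textbf{Main obstacle.} The hard step is the bound on the guessed outputs $\lambda,\omega,\rho$, which is needed to keep the certificate polynomial. My first approach is to fix $\Lcal$ and $\Rcal$ and observe that, once the automata are fixed, the outputs are determined by $f$ up to a canonical normalization, since the output of $\omega$ on $(l,\sigma,r)$ is a factor of $f$ on short witnesses. An alternative that avoids the issue entirely is to guess only $\Lcal$ and $\Rcal$ and then compute the outputs in \ptime using the construction of~\cite{FilGauLho2019}.
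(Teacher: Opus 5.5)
Your route is the paper's: for \PbRegMin you compute a bimachine with a minimal right automaton (for total $f$, one with right automaton $\Rcal_f$) and conclude with Corollary~\ref{cor:regCpxSSTIndep}. For \PbSttRegMin you guess a bimachine with $n$ left and $k$ right states and check it in \ptime. Your remarks on padding (to reach exactly $k$ registers or $n$ states) and on the minimality check being unnecessary for the decision question are correct. Two auxiliary claims, however, are false.

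First, a bimachine cannot in general be obtained in polynomial time from an unambiguous transducer. The classical construction applies subset constructions to the underlying automaton in both directions, and the blow-up is unavoidable. For instance, let $f$ mark every letter of $x\#y$ iff exactly one of ``the $N$-th letter from the end of $x$ is $a$'' and ``the $N$-th letter of $y$ is $a$'' holds, and let $f$ be the identity on all other words. An unambiguous transducer with $O(N)$ states guesses the answer and verifies it. In any bimachine, however, the outputs emitted inside a long $x$ cannot be postponed, so the right state reached on $y$ must determine $y_N$. By codeterminism it then determines all of $y_1\cdots y_N$, so the right automaton needs $2^N$ states. The \ptime bound must therefore be read for $f$ given as a bimachine, or as an $\SST_{\textsf{iffo}}$, which Proposition~\ref{prop:SSTIndepToBim} turns into a bimachine of the same size; this is what the paper implicitly assumes when it invokes bimachine minimization. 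Likewise, equivalence of arbitrary functional transducers is \pspace-hard because of the domains. Your check is polynomial only because $f$ is total and the guessed bimachine's domain is $\lanNFA{\Lcal}\cap\lanNFA{\Rcal}$, with $\Lcal$ deterministic and $\Rcal$ codeterministic.

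Second, you rightly observe that the certificate also contains $\lambda,\omega,\rho$; the paper passes over this by asserting that the size of $\Bcal$ is polynomial in $n$ and $k$. But the normalization you propose, read as ``outputs bounded by those of the input'', is false. Take $f(w)=\hat w$ if $w_2=a$ and $f(w)=w$ otherwise. A bimachine whose right automaton remembers the last letter read emits one letter per position. With a one-state right automaton and a left automaton that only remembers $w_1$ and then $w_2$, position $1$ must emit $\epsilon$ and position $2$ must emit two letters. What you need is a polynomial bound on the outputs, for instance by emitting them as early as possible and bounding the resulting delays in terms of the numbers of states. Alternatively, use your other route of guessing only $\Lcal$ and $\Rcal$ and computing the outputs in \ptime. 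Either must be proved rather than assumed.
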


Minimization results of~\cite{ReuSch1991,FilGauLho2019} require
from the right automaton of the bimachine to recognize the domain of the function,
which is a condition that is not required from an \SST since its states and register
updates are not necessarily correlated.
Thus, the characterization of the register complexity of Corollary~\ref{cor:regCpxSSTIndep}
cannot easily be extended to partial functions.
Moreover, even for total functions, the register complexity for the class
$\SST_{\textsf{iffo}}$ can be higher than the one for general \SST.
For instance, considering the function of Example~\ref{ex:FST_last},
one can verify that the register complexity of this 
  function relative to the class $\SST_{\textsf{iffo}}$ is three, while it can be a realized by an \SST with two registers (Figure~\ref{fig:SST_indep}).
  
%


\section{A new model of bimachines}%
\label{sec:asynchronous}


%
%
To drop the independent flow restriction, we will modify bimachines as follows: instead of
using both automata to read the input simultaneously in both directions,
what we will call an \emph{asynchronous bimachine} will first read the input
using its left automaton, 
and then, using its right automaton, it will read 
the run of the left automaton
and produce the output.
This is also reminiscent of the proof of~\cite[Theorem IV.5.2]{Ber1979} of Elgot and Mezei's decomposition.


\paragraph{Definition of the model}

  
  An asynchronous bimachine on two finite alphabets $\alp$ and $\alpp$ is a tuple
  $\Bcal = (\Lcal, \Rcal, \lOutFctBim, \outFctBim, \rOutFctBim)$ where
    its \emph{left automaton} is $\Lcal = (L, \lIni, \delta_\Lcal, F_\Lcal)$,
          a deterministic finite automaton on the alphabet $\alp$,
    its \emph{right automaton} is $\Rcal = (R, I_\Rcal, \delta_\Rcal, \rFin)$,
          a codeterministic finite automaton on the alphabet $L \times \alp$,
    and $\lOutFctBim \fct{I_\Rcal}{\wds[\alpp]}$, $\outFctBim
            \fct{R \times (L \times \alp) \times R}{\wds[\alpp]}$, and
          $\rOutFctBim \fct{F_\Lcal}{\wds[\alpp]}$ are respectively its
          \emph{left output}, \emph{output} and \emph{right output} functions.

A run of $\Bcal$ on a word $w = a_1 \dots a_n \in \wds$ is a finite sequence
$(l_j, r_j)_{j \in \intInterv{0}{n}}$ of elements of $L \times R$ such that
$l_0 = \lIni$, $(l_j)_{j \in \intInterv{0}{n}}$ is the run of $\Lcal$ on $w$,
$r_n = \rFin$ and $(r_j)_{j \in \intInterv{0}{n}}$
is the run of $\Rcal$ on the word $w_\Lcal = (l_0, a_1)\dots(l_{n-1}, a_n) \in \wds[
    (L \times \alp)]$.
The transduction $ \fctABim{\Bcal} \fct{\wds}{\wds[\alpp]}$ \emph{realized}
by $\Bcal$ is defined by
$$
   \fctABim{\Bcal}(w) = \lOutFctBim(r_0)\
  \prod_{j=1}^{n} \outFctBim\big( r_{j-1},\, (l_{j-1},a_j) \,, r_j \big)\
  \rOutFctBim(l_n)
$$


As described above, we can view $\fctABim{\Bcal}$ as the composition of 
a function realized by a sequential transducer $\Tcal_\Lcal$ annotating the input
by the states of $\Lcal$ and a cosequential transducer $\Tcal_\Rcal$ reading 
these annotated words.
We define these transducers more formally in Appendix~\ref{apx:async},
with an example in Appendix~\ref{apx:ex-async}.

\paragraph{Correspondence with streaming string transducers}
Using asynchronous bimachines we manage to characterize the full class of \SST:

\begin{theorem}\label{thm:caracSST}
  A rational function can be realized by an \SST with $\sCRA$ states and
  $\rCRA$ registers if and only if it can be realized by an asynchronous bimachine
  having a left automaton with $\sCRA$ states and a right automaton with
  $\rCRA$ states.
\end{theorem}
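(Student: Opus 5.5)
The plan is to give two direct constructions, one in each direction, that preserve the two parameters: states of the \SST become states of the left automaton, and registers of the \SST become states of the right automaton. The guiding idea is that the left automaton of an asynchronous bimachine plays the role of the underlying \DFA of the \SST. The right automaton, read backwards on the annotated word $w_\Lcal$, tracks \emph{which register} currently holds the prefix of the final output. Since the right automaton reads pairs $(l,\ltr)$, the flows may now depend on the state, which is exactly what the independent-flow restriction of Proposition~\ref{prop:SST-bimachines} forbade.

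\emph{From \SST to asynchronous bimachine.} Let $\Scal = (Q, \varSet, q_0, v_0, \sttTrSST, \regUpSST, \outFctSST)$, assumed to have a fixed output register $\regOut$. This assumption costs no extra register: swapping register contents as explained in Section~\ref{sec:bimachine} keeps $\card{\varSet}$ unchanged, and otherwise we let $\rFin$ encode the register chosen by $\outFctSST$ through the final output. Set $\Lcal$ to be the underlying automaton of $\Scal$, with $F_\Lcal = \dom(\outFctSST)$ and $\rOutFctBim(q) = w$ when $\outFctSST(q) = \regOut w$. Set $R = \varSet$ and $\rFin = \regOut$, and put a transition $Y \xrightarrow{(q,\ltr)} X$ whenever $\regUpSST(q,\ltr)(X) = Yw$. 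This is codeterministic, because each $X$ has exactly one source $Y$ for a given $(q,\ltr)$. Take $I_\Rcal = \varSet$, $\lOutFctBim(X) = v_0(X)$, and $\outFctBim(Y,(q,\ltr),X) = w$. A straightforward induction on $\lenWd{w}$ then shows that the value of register $X$ after reading $w$ equals $\lOutFctBim(r_0)\prod \outFctBim(\cdots)$ along the backward run of $\Rcal$ from $X$. Hence $\fctABim{\Bcal} = \fctSST{\Scal}$.

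\emph{From asynchronous bimachine to \SST.} Take $Q = L$ with the transitions of $\Lcal$, and $\varSet = R$. The initial valuation is $v_0(r) = \lOutFctBim(r)$ for $r \in I_\Rcal$. The update on $(l,\ltr)$ is $r \coloneqq r'\,\outFctBim(r',(l,\ltr),r)$, where $r'$ is the unique predecessor of $r$ under $(l,\ltr)$ given by codeterminism. The output is $\outFctSST(l) = \rFin\,\rOutFctBim(l)$ for $l \in F_\Lcal$. The same induction gives equivalence, and the counts are exactly $\card{L}$ states and $\card{R}$ registers.

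The main obstacle is partiality. Both models use partial objects: a register may lack a predecessor, or $r_0 \notin I_\Rcal$, or $\Rcal$ may be non-co-complete. In the second direction this makes updates partial, which plain \SST do not allow. I would handle this by assigning such registers an arbitrary value, for instance $\e$. The justification is an invariant: any register whose backward run is undefined is never flowed into $\rFin$ on a word where the bimachine produces an output. Consequently its contents never reach the output, and $\fct{}{}$-equality holds on $\dom(\fctABim{\Bcal})$. In the other direction, if $\regOut$ would be undefined the output is undefined as well, so again nothing is lost. Making this invariant precise, so that domains coincide exactly, is where I expect the care to go.
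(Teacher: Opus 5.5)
Your first direction (\SST with a fixed output register to asynchronous bimachine, with $\Lcal$ the underlying automaton, $R = I_\Rcal = \varSet$, $\rFin = \regOut$, and the transitions of $\Rcal$ read off the flows) is exactly the paper's construction, and it is correct. Because the updates are total, every register has a predecessor on every $(q,\ltr)$, so $\dom(\fctABim{\Bcal}) = \lanNFA{\Lcal} = \dom(\fctSST{\Scal})$. The per-state renaming of registers is all you need for the fixed output register; nothing has to be encoded in $\rFin$.

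The gap is in the converse direction. Patching missing predecessors and non-initial registers with $\e$ only gives $\fctSST{\Scal}(w) = \fctABim{\Bcal}(w)$ for $w \in \dom(\fctABim{\Bcal})$. The domain of your \SST is $\lanNFA{\Lcal}$, whereas $\dom(\fctABim{\Bcal}) = \setWCnd{w \in \lanNFA{\Lcal}}{w_\Lcal \in \lanNFA{\Rcal}}$ may be strictly smaller. So you realize an extension of the function, not the function itself; this is the same discrepancy as in Proposition~\ref{prop:BimToSSTIndep}. More care will not fix this. The domain of an \SST is recognized by its underlying \DFA, while $\Rcal$ can restrict the domain in ways $\Lcal$ cannot. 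Here is a concrete instance. Let $\Lcal$ have a single state $l$, initial and final, looping on $a$ and $b$. Let $\Rcal$ have states $\rFin, p$, with $(l,a)\cdot_\Rcal r = p$ and $(l,b)\cdot_\Rcal r = \rFin$ for both $r \in \set{\rFin,p}$, $I_\Rcal = \set{p}$, and all outputs $\e$. Then $r_0 = p$ iff the input starts with $a$, so this bimachine realizes the constant function $\e$ on $a\wds$. But a one-state \SST has a domain that is either empty or contains $\e$, so no one-state \SST realizes this function.

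What you are missing is the normalization the paper imposes in Appendix~\ref{apx:async}. It requires that every state of $\Rcal$ is initial, and that $(l,\ltr)\cdot_\Rcal r$ is defined for every $r \in R$ whenever $l \cdot_\Lcal \ltr$ is defined. Under it, $w \in \lanNFA{\Lcal}$ implies $w_\Lcal \in \lanNFA{\Rcal}$, hence $\dom(\fctABim{\Bcal}) = \lanNFA{\Lcal}$. The initial valuation $v_0(r) = \lOutFctBim(r)$ and the updates $r \coloneqq r'\,\outFctBim(r',(l,\ltr),r)$ are then defined wherever they are needed, with no patching. Your induction then gives $\fctSST{\Scal} = \fctABim{\Bcal}$. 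Your first direction always produces bimachines of this normalized form, so the theorem is really an equivalence between \SST and normalized asynchronous bimachines. The example above shows the restriction is essential, not a convenience, so adopt it explicitly rather than trying to make the domains coincide in general.
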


The conversions are the same as those of Proposition~\ref{prop:SST-bimachines}.
The equality of the domains and the removal of the restriction on the flows
of the \SST are direct consequences of the definition of asynchronous bimachines.
\begin{corollary}\label{cor:regCpxSST}
  The register complexity of a rational function $f$, relative to the class
  \SST, is the minimal number of states needed by the right
  automaton of an asynchronous bimachine to realize $f$.
\end{corollary}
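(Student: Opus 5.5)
The corollary follows from Theorem~\ref{thm:caracSST}: I will show that the two minima range over the same set of integers. Let $f$ be rational, and let $k_{\SST}$ be its register complexity relative to \SST, that is, the least $\rCRA$ such that some \SST with $\rCRA$ registers (and any number of states) realizes $f$. Let $k_{\mathrm{as}}$ be the least number of states of the right automaton of an asynchronous bimachine realizing $f$. Both minima exist. By the lemma of~\cite{AluDADesRagYua2013}, $f$ is realized by some \SST, so the first set is nonempty. The ``only if'' direction of Theorem~\ref{thm:caracSST} then gives an asynchronous bimachine realizing $f$, so the second set is nonempty. Both are sets of natural numbers, hence well-ordered.

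For $k_{\mathrm{as}} \leq k_{\SST}$, take an \SST realizing $f$ with $k_{\SST}$ registers and some number $\sCRA$ of states. The ``only if'' direction of Theorem~\ref{thm:caracSST} yields an asynchronous bimachine realizing $f$ whose left automaton has $\sCRA$ states and whose right automaton has $k_{\SST}$ states. Minimality of $k_{\mathrm{as}}$ gives the inequality. For $k_{\SST} \leq k_{\mathrm{as}}$, take an asynchronous bimachine realizing $f$ whose right automaton has $k_{\mathrm{as}}$ states and whose left automaton has some number $\sCRA$ of states. The ``if'' direction yields an \SST realizing $f$ with $\sCRA$ states and $k_{\mathrm{as}}$ registers, so $k_{\SST} \leq k_{\mathrm{as}}$.

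The only point needing care is that Theorem~\ref{thm:caracSST} is stated for a fixed pair $(\sCRA,\rCRA)$, whereas the register complexity leaves the number of states free. So the argument must quantify existentially over $\sCRA$ on both sides, as above, rather than fixing it. The theorem's statement matches states with states and registers with right-automaton states, so this existential projection is harmless and no further obstacle arises. All the real work, namely the conversions, the domain issues and the dropping of independent flows, is already contained in Theorem~\ref{thm:caracSST}.
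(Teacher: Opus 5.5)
Your argument is correct and is the same as the paper's, which reads the corollary directly off Theorem~\ref{thm:caracSST} by forgetting the number of states. Your two inequalities, one from each direction of the equivalence with the state count quantified existentially, make that step explicit. The remaining work (fixed output register, the domain conventions on asynchronous bimachines) is indeed contained in the theorem.
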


\paragraph{Minimization of asynchronous bimachines}
By Corollary~\ref{cor:regCpxSST}, to solve \PbRegMin, we have to find an asynchronous
bimachine with a right automaton having the minimal number of states among all
the bimachines realizing the same function.
However, adapting the known algorithms of~\cite{ReuSch1991,FilGauLho2019} for
the minimization of bimachines to asynchronous bimachines seems rather challenging.

A first step would be to fix the left automaton and search for the smallest possible
right automaton that can be associated with it.
This corresponds to minimizing the number of registers of a given \SST without
changing its underlying automaton. We formalize this problem as follows:\\
\noindent
\textbf{Problem:} Fixed Underlying Automaton Register Minimization (\PbFARegMin)\\
\noindent \textit{Input:} an \SST $\Scal$ realizing a function $f$, and $k\in \mathbb{N}$\\
\noindent \textit{Question:} Does there exist an \SST $\Scal'$ that realizes $f$, has $k$ registers
and s.t. $\Scal$ and $\Scal'$ have the same underlying automaton?

\smallskip

Viewing an asynchronous bimachine $\Bcal$ as the composition of a sequential
transducer $\Tcal_\Lcal$ and a cosequential one $\Tcal_\Rcal$ as above, we want
to find a cosequential transducer $\Tcal$ with the minimal number of states such
that $\fctTrn{\Tcal}(w) = \fctTrn{\Tcal_\Rcal}(w)$ for all $w \in \Tcal_\Lcal(\wds)$.
Without loss of generality, we can consider $\fctTrn{\Tcal}$ to be a total function.
Furthermore, up to mirroring the input word, we can consider $\Tcal_\Rcal$ sequential
and search for a sequential transducer.

We have just reduced the problem of minimizing the right automaton of an asynchronous
bimachine, with a fixed left automaton, to the problem of extension of
sequential transducers, defined as:\\
\noindent
\textbf{Problem:} Extension of sequential transducers\\
\noindent \textit{Input:} a sequential transducer $\Tcal$ and
    $\rCRA \in \Nbb$\\
\noindent \textit{Question:} Does there exist a sequential transducer $\Tcal'$
with $\rCRA$ states such that 
$\fctTrn{\Tcal'}(w) = \fctTrn{\Tcal}(w)$ for all $w \in \dom(\fctTrn{\Tcal})$?

\smallskip

This problem has been studied in previous works~\cite{Pfl1973,ReuMer1986},
and shown to be \np-complete.
Regarding the problem \PbFARegMin, one can propose an \np procedure 
which follows a classical guess-and-check scheme. Adapting
the hardness proofs developed for the extension of sequential transducers, we prove:
\begin{proposition}%
\label{prop:NPhard}
Solving problem \PbFARegMin for an \SST with partial updates is \npc.
\end{proposition}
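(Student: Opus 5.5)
We prove membership in \np and \np-hardness separately.

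\emph{Membership in \np.} Given an \SST with partial updates $\Scal$ with underlying automaton $\Acal$ and $k$ (in unary, as in~\cite{DavReyTal2016}), we guess $\Scal'$: it has the same states, transitions and final states as $\Scal$, and $k$ registers. We guess, for each transition, a partial substitution over the $k$ registers, plus an initial valuation and an output expression for each final state.

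The words in these substitutions must have polynomially bounded length. To justify this, we view $\Scal'$ as an asynchronous bimachine whose left automaton is $\Acal$, as in Theorem~\ref{thm:caracSST}. The right automaton is then a cosequential transducer on $L\times\alp$ that must agree with $\Tcal_\Rcal$ on the language $\Tcal_\Lcal(\wds)$. A standard normalization argument bounds the outputs: push outputs as early as possible, as in the minimization of sequential transducers, so that every output is a factor of outputs of $\Tcal_\Rcal$ along short runs. This gives outputs of length polynomial in $|\Scal|$.

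Equivalence of $\Scal$ and $\Scal'$ is checked in polynomial time. Both are converted to unambiguous transducers of polynomial size (one run per register), and equivalence of functional transducers is decidable in \ptime. Partial updates only restrict domains, and the product construction accounts for them.

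\emph{Hardness.} We reduce from the extension problem for sequential transducers, shown \np-complete in~\cite{Pfl1973,ReuMer1986}; if needed we start from the underlying graph-colouring or automaton-consistency problem used there. Given a sequential transducer $\Tcal$ over $\alp$ and $k$, we build an \SST with partial updates $\Scal$ in three steps.
\begin{enumerate}
\item The underlying automaton of $\Scal$ is the minimal \DFA of $\dom(\fctTrn{\Tcal})$, or the underlying automaton of $\Tcal$ itself.
\item We mirror the input, reading words reversed, so that sequential on the right side becomes cosequential as in the asynchronous bimachine view.
\item The registers of $\Scal$ simulate the states of $\Tcal$ as a right automaton.
\end{enumerate}
Partial updates are exactly what lets undefined registers encode that $\Tcal$ is only specified on its domain, so that any total extension corresponds to an \SST. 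We then use the reduction of the previous paragraphs, where Corollary~\ref{cor:regCpxSST} and Theorem~\ref{thm:caracSST} transfer the number of right states to the number of registers. In the direction from $\Tcal$ to $\Scal$, these translate a $k$-state extension $\Tcal'$ into a $k$-register \SST with the same underlying automaton. In the converse direction, the right automaton of the bimachine obtained from such an \SST gives an extension of $\Tcal$ with $k$ states.

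\emph{Main obstacle.} The main obstacle is that the correspondence of Section~\ref{sec:asynchronous} reads the annotated alphabet $L\times\alp$ and the reversed input. The hard instances for extension must therefore be realized as right automata over annotated words coming from a fixed left automaton. I would first reinspect the hardness gadget of~\cite{Pfl1973}: each state to be coloured becomes a distinct left state, so that annotations make the instance's constraints local. I would then check that partial updates suffice to leave the unconstrained transitions free, and that total updates would not.
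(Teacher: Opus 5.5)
Your hardness half is where the argument breaks down. The reduction is never actually specified or verified (``I would first reinspect\dots'', ``I would then check\dots''), and the choices you suggest for the underlying automaton point the wrong way. That automaton is the control memory of the \SST, i.e.\ the left automaton of the asynchronous bimachine. Whatever it records is information the registers no longer need to carry, which lowers the register count the reduction is supposed to control. If it is the underlying automaton of $\Tcal$ read in $\Tcal$'s own direction, one register suffices: just append $\Tcal$'s outputs. In the colouring gadget, the constraining inputs are two-letter words $a_1a_2$, mapped to one word if $a_1=a_2$ and to another if $(a_1,a_2)\in E$. Making ``each state to be coloured a distinct left state'' means the control state after $a_1$ identifies $a_1$. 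Then on reading $a_2$ the \SST knows both letters, and a bounded number of registers suffices whatever the chromatic number. Even the minimal \DFA of the domain does this as soon as closed neighbourhoods are pairwise distinct, since the residual of the domain after $a_1$ is essentially $\{a_1\}\cup N(a_1)$. For the same reason your converse direction fails: with a nontrivial left automaton, the right automaton extracted from a $k$-register $\Scal'$ reads $L\times\alp$, so it is not a $k$-state extension of $\Tcal$ over $\alp$.

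The missing idea, which is what the paper does, is to make the left automaton trivial: a single state accepting every word. Then $L\times\alp\cong\alp$ and the annotation is vacuous, so the obstacle you identify disappears. By Theorem~\ref{thm:caracSST}, an \SST with this underlying automaton and $k$ registers is the same as a co-complete codeterministic right automaton over $\alp$ with $k$ states, i.e.\ after mirroring, a complete sequential transducer with $k$ states. The undefined updates of the input $\Scal$ play the role of unspecified transitions, so $\Scal'$ is only constrained where $\Scal$ is defined. \PbFARegMin on such inputs is thus literally the mirrored extension problem, and the paper plugs in Pfleeger's gadget~\cite{Pfl1973} directly. Its right states are $V\cup\{S_0,S_N,S_F\}$ over the alphabet $V$, all initial, with $S_0$ final. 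Its transitions are $v\xrightarrow{v}S_0$, $S_F\xrightarrow{v}v$, and $S_N\xrightarrow{v}v'$ for $(v,v')\in E$, with outputs separating $S_N$ from $S_F$. Then $G$ is $k$-colourable iff $k+3$ registers suffice. Your membership part is fine but checks equivalence differently from the paper. You use unambiguous transducers with one run per state--register pair, whereas the paper analyses live and possibly-undefined registers on the flow graph. Note that your \ptime claim rests on unambiguity, since equivalence of general functional transducers subsumes \NFA equivalence. Also, the length bound on update words needs no pushing: every update word used on an accepting run is a factor of $f(w)$ for some $w$ of length $O(|Q|\,k)$, and unused ones can be set to $\e$.
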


This shows that \PbFARegMin is hard, and we are thus interested
in identifying an algorithm to solve it using an automata-based approach.
In the particular case of letter-to-letter transducers, or, rational functions
with a prefix-closed domain, we show that the problem of extension of
a sequential transducer is equivalent to a refinement problem
for deterministic automata.

Given a \DFA $\Acal = (Q, q_0, \delta, F)$, we will call a precongruence
on $Q$ any reflexive and symmetric relation $\approx$ on $Q$ such that for all
$p,q \in Q$ and $\ltr \in \alp$, if $p \approx q$ then $p \cdot_\Acal \ltr \approx
  q \cdot_\Acal \ltr$.
Any precongruence $\approx$ gives rise to a relation on $\wds$, that we will also
denote by $\approx$, defined for all $u,v \in \wds$ by $u \approx v$ if and only if
$q_0 \cdot_\Acal u \approx q_0 \cdot_\Acal v$.
We can view $\approx$ as a compatibility relation on the states of $\Acal$,
which is not necessarily transitive, but ensures that any two runs of $\Acal$ on
the same word starting from two compatible states ends in two states that
are also compatible.

We want to obtain a \DFA with a smaller number of states by merging together the
states of $\Acal$ that are compatible.
More formally, we define:\\
\noindent
\textbf{Problem:} Minimal refinement Problem\\
\noindent \textit{Input:} a precongruence $\approx$ on the states of a \DFA and
    $\rCRA \in \Nbb$\\
\noindent \textit{Question:} Does there exist a \DFA finer than $\approx$ with at most
    $\rCRA$ states?

\begin{proposition}\label{prop:pbRefEqPbExt}
  The problem of extension of sequential transducers, for rational functions
  realizable by a letter-to-letter transducer and those with a prefix-closed
  domain, is equivalent to the minimal refinement problem.
\end{proposition}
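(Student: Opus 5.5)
We prove both directions with explicit polynomial constructions that preserve the number of states. Throughout, $\Tcal$ is a sequential transducer with underlying \DFA $\Acal = (Q, q_0, \delta, F)$ and $D = \dom(\fctTrn{\Tcal})$. We assume $\Tcal$ is trimmed, so every state is accessible and co-accessible. In the letter-to-letter case we may also normalise it so that every produced letter is determined by the input letter read and the current state.

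\textbf{From extension to refinement.} We define a compatibility relation $p \approx q$ on $Q$. It holds iff for every word $w$ that can be read from both $p$ and $q$ into final states, the outputs agree: $\oFct(p,w)\,\finOFct(p\cdot w) = \oFct(q,w)\,\finOFct(q\cdot w)$.

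For letter-to-letter functions, and more generally for prefix-closed domains where outputs can be compared letter by letter, this condition reduces to a local one. The transitions from $p$ and $q$ on a common letter must emit the same output. Whenever both are defined and the successors exist, the successors must again be compatible.

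With this local form, $\approx$ is reflexive, symmetric and stable under letters. Hence it is a precongruence, and it is computable as a greatest fixpoint. Undefined transitions can be sent to a sink state compatible with everything, which keeps $\approx$ a precongruence on a complete \DFA.

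\textbf{Key claim.} We show that a sequential $\Tcal'$ with $k$ states extending $\Tcal$ exists iff some \DFA with $k$ states is finer than $\approx$.

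For ($\Rightarrow$), take the underlying \DFA of $\Tcal'$. If $u \congDFA{\Tcal'} v$, then from the common state of $\Tcal'$ the continuations $w$ of $u$ and of $v$ in $D$ produce the same output suffix. Prefix-closedness, or letter-to-letter synchrony, is what lets us cancel the prefix outputs $f(u)$ and $f(v)$ without delays. It follows that $q_0\cdot u \approx q_0\cdot v$.

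For ($\Leftarrow$), let $\Acal'$ be finer than $\approx$. Define each output of $\Acal'$ on a letter $\sigma$ from any $\Tcal$-state reachable together with that $\Acal'$-state, namely $q_0\cdot u$ for some $u$ reaching it. Compatibility guarantees this choice is independent of the representative $u$. Final outputs are handled in the same way.

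\textbf{Converse reduction.} Given a precongruence $\approx$ on a \DFA, we build a letter-to-letter sequential transducer on an extended input alphabet. The goal is that exactly the incompatible pairs become output-inconsistent. For instance, we add a fresh letter $\#_{p}$ for each state $p$ that outputs a distinct symbol $c_{p,q}$ for each $q \not\approx p$. We then check that the compatibility relation above recovers $\approx$.

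\textbf{Main obstacle.} The hardest step is the well-definedness argument in ($\Leftarrow$), together with the exact use of the hypothesis in ($\Rightarrow$). Since $\approx$ is not transitive, we must verify that each $\Acal'$-class, although not a clique through transitivity, is pairwise compatible because $\Acal'$ is finer than $\approx$. This pairwise compatibility ensures the chosen outputs are consistent for all words in $D$. I expect the delicate point to be showing, for prefix-closed domains, that outputs can be pushed so that there are no delays; if needed, I would first normalise $\Tcal$ into its onward form.
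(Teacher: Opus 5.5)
Your overall structure matches the paper's. You build an output-agreement precongruence, show that the underlying DFA of $\Tcal'$ is finer than it, copy outputs from representatives for the other direction, and use fresh letters for the converse. The step you flag as hardest, well-definedness in $(\Leftarrow)$, is immediate from the definition of \emph{finer}. There are two genuine gaps, one in each reduction.

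\textbf{The forward reduction.} Your first relation (agreement only on words $w$ leading \emph{both} $p$ and $q$ to final states) does not reduce to the local greatest-fixpoint relation, even for a trim letter-to-letter $\Tcal$. Suppose the only accepting paths are $p\xrightarrow{a/x}p_1\xrightarrow{b}f$ and $q\xrightarrow{a/y}q_1\xrightarrow{c}f$, with $x\neq y$. Then $p,q$ are compatible in your first sense but not in the local one. Indeed, a letter-to-letter $\Tcal'$ reaching a common state on words leading $\Tcal$ to $p$ and $q$ would have to output both $x$ and $y$ on the next $a$. Your $(\Rightarrow)$ argument only compares common continuations $w$ with $uw,vw\in D$, so it yields only this coarse relation. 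It does not show that $\Acal'$ is finer than the precongruence you actually use. The paper works directly with $p\outSeqTr[\Tcal] w=q\outSeqTr[\Tcal] w$ and compares outputs position by position. This needs trimness, which you do assume, to complete $uw'\sigma$ and $vw'\sigma$ by possibly different continuations. Crucially, it also needs $\Tcal'$ to be letter-to-letter (``since $\Tcal$ and $\Tcal'$ are letter-to-letter''); synchrony of $\Tcal$ alone does not suffice. For $D=\{ac,bc\}$ with $f(ac)=ax$ and $f(bc)=by$, every DFA finer than $\approx$ must separate $a$ from $b$. Yet the one-state transducer with outputs $a/ax$, $b/by$, $c/\e$ extends $f$. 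Likewise, in the prefix-closed case onward normalisation does not remove final outputs. Take $f(\e)=\e$, $f(a)=b$, $f(aa)=c$: since $f(a)$ is not a prefix of $f(aa)$, every sequential transducer for $f$ has a non-empty final output. Without empty final outputs the cancellation of $f(u)$ is unavailable. The paper's proof tacitly assumes empty final outputs as well.

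\textbf{The converse gadget.} This misses the key idea. A transition emits a single output, so ``$\#_p$ outputs $c_{p,q}$ for each $q\not\approx p$'' must mean that $\#_p$ read in $q$ outputs $c_{p,q}$. Then two states $q\approx q'$ that are both incompatible with $p$ are wrongly separated. More fundamentally, suppose every fresh letter is defined in every state. Then the extended transducer is complete, as the input DFA is, and compatibility just says that two states compute the same output function. That relation is transitive, so it can never reproduce a non-transitive $\approx$. Non-transitivity must be encoded through \emph{partiality}. The paper adds one letter $\sigma_{p,q}$ per incompatible pair, with self-loops only at $p$ and $q$ emitting the distinct symbols $p$ and $q$. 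Hence two compatible states never both have a defined transition on the same fresh letter.

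Your per-state idea can be repaired in the same spirit. Make $\#_p$ a self-loop at $p$ with output $c_p$, and at every $q\not\approx p$ with a common output $d_p$, leaving it undefined elsewhere. Then $r\not\approx r'$ is witnessed by $\#_r$ (output $c_r$ versus $d_r$). For $r\approx r'$ with $r\neq r'$, $\#_r$ and $\#_{r'}$ are each undefined in one of the two states. Any other $\#_s$ defined in both outputs $d_s$ in both.
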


We will now describe a construction, we call the \emph{subset expansion}, allowing
the enumeration of all the minimal \DFA refining a given precongruence.
It is similar to the classical subset construction for the determinization of \NFA,
but the subsets are formed with states that are pairwise compatible,
representing all the merges of states allowed by the precongruence.
We define it as follows.
\begin{definition}
  
  Let $\Acal = (Q_\Acal, q_0, \delta_\Acal, F_\Acal)$ be a \DFA and let $\approx$ be a
  precongruence on $Q$.
  We define the \NFA $\Ecal = (Q_\Ecal, I_\Ecal, \Delta_\Ecal, F_\Ecal)$ of the
  subset expansion of $\Acal$ relative to $\approx$ as follows:
    $Q_\Ecal = \setWCnd{P \subseteq Q_\Acal}{\forall p,q \in P,\, p \approx q}$,
    $I_\Ecal = \setWCnd{P \in Q_\Ecal}{q_0 \in P}$,
    $\Delta_\Ecal = \setWCnd{(P, \ltr, Q) \in Q_\Ecal \times \alp \times Q_\Ecal}
            {\forall p \in P,\, p \cdot_\Acal \ltr \in Q}$
    and $F_\Ecal = Q_\Ecal$.
\end{definition}

Given two automata $\Acal = (Q, I, \Delta, F)$ and $\Acal' = (Q', I', \Delta', F')$,
we call $\Acal'$ a \emph{subautomaton} of $\Acal$ if $Q' \subseteq Q$, $I' \subseteq I$,
$\Delta' \subseteq \Delta$ and $F' \subseteq F$.
Note that all the deterministic subautomata of $\Ecal$ are finer than $\approx$
by definition. We prove:

\begin{theorem}%
\label{thm:refinement}
  Let $\Acal$ be a \DFA, let $\approx$ be a precongruence on its states and
  let $\Ecal$ be the automaton of the subset expansion of $\Acal$ relative to
  $\approx$.

  If $\Bcal$ is a \DFA finer than $\approx$ then there exists
  a deterministic subautomaton $\Bcal'$ of $\Ecal$ such that $\congDFA{\Bcal}$
  is finer than $\congDFA{\Bcal'}$.
  Moreover, if $\Bcal$ is a minimal automaton among all the \DFA finer
  than $\approx$ then $\Bcal'$ is equal to $\Bcal$ up to state renaming.
\end{theorem}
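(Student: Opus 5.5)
Let $\Bcal = (Q_\Bcal, b_0, \delta_\Bcal, F_\Bcal)$ be a \DFA finer than $\approx$, taken complete and accessible. To each state $b$ of $\Bcal$ we associate a subset of $Q_\Acal$ by setting
\[
  P_b = \setWCnd{q_0 \cdot_\Acal u}{u \in \wds,\ b_0 \cdot_\Bcal u = b}.
\]
Then we build $\Bcal'$ on the states $\setWCnd{P_b}{b \in Q_\Bcal}$, with initial state $P_{b_0}$ and transitions $P_b \xrightarrow{\ltr} P_{b \cdot_\Bcal \ltr}$.

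The first step is to check that every $P_b$ is a state of $\Ecal$. If $q_0 \cdot_\Acal u$ and $q_0 \cdot_\Acal v$ both lie in $P_b$, then $u \congDFA{\Bcal} v$. Since $\Bcal$ is finer than $\approx$, this gives $u \approx v$, that is, $q_0\cdot_\Acal u \approx q_0 \cdot_\Acal v$, so $P_b$ is pairwise compatible.

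Next, $P_{b_0}$ contains $q_0$ (take $u=\e$), so it is initial in $\Ecal$. Each transition lies in $\Delta_\Ecal$: if $p = q_0\cdot_\Acal u \in P_b$, then $p \cdot_\Acal \ltr = q_0 \cdot_\Acal u\ltr \in P_{b\cdot_\Bcal \ltr}$. All states of $\Ecal$ are final, so final states cause no trouble.

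The delicate point is that the map $b \mapsto P_b$ need not be injective. Two states $b \neq b'$ may give $P_b = P_{b'}$ while $P_{b\cdot\ltr} \neq P_{b'\cdot\ltr}$, and then $\Bcal'$ is not deterministic. I would avoid this by working with $\Bcal$ quotiented first. Define $b \equiv b'$ iff $P_{b\cdot_\Bcal w} = P_{b'\cdot_\Bcal w}$ for all $w$. This is a right congruence on the states, and its quotient $\Bcal/{\equiv}$ is still finer than $\approx$, because its sets $P$ coincide with those of $\Bcal$.

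In $\Bcal/{\equiv}$ two classes can still share the same set $P$, so I would instead identify states by the whole "future profile" $w \mapsto P_{b\cdot w}$ and map into $\Ecal$ through it. Concretely, I would simply choose $\Bcal'$ to be the deterministic automaton whose states are the sets $P_{[u]_\Bcal}$, reached deterministically from $P_{b_0}$ by reading $u$. Its transition on $\ltr$ from the state reached by $u$ is defined as $P_{b_0\cdot u\ltr}$.

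Well-definedness then requires that whenever $u$ and $v$ reach the same set, so do $u\ltr$ and $v\ltr$. If that fails, I fall back on the quotient by $\equiv$ and accept that $\Bcal'$ uses the profile as its state. In every case the state of $\Bcal'$ reached by $u$ is determined by $b_0\cdot_\Bcal u$, which yields $\congDFA{\Bcal}$ finer than $\congDFA{\Bcal'}$. I expect this determinism issue to be the main obstacle of the first part.

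For the second claim, suppose $\Bcal$ is minimal among \DFA finer than $\approx$. As a deterministic subautomaton of $\Ecal$, $\Bcal'$ is itself finer than $\approx$. Since $\congDFA{\Bcal}$ is finer than $\congDFA{\Bcal'}$, $\Bcal'$ has at most as many reachable states as $\Bcal$.

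Minimality gives equality of the number of classes. A right congruence finer than another with the same finite index must equal it, so $\congDFA{\Bcal} = \congDFA{\Bcal'}$. Both automata are accessible and deterministic, so this means they are equal up to state renaming.
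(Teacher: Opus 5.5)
You have found the real weak point: the map $b\mapsto P_b$ (the paper's $\varphi$) need not be injective, and then $P_b\xrightarrow{\ltr}P_{b\cdot_\Bcal\ltr}$ is not well defined. The paper's proof defines $\Bcal'$ in exactly this way and never checks this.

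Your repair does not work, though. The states of a subautomaton of $\Ecal$ \emph{are} subsets of $Q_\Acal$. So two classes of $\Bcal/{\equiv}$ (or two \enquote{profiles}) carrying the same set $P$ are forced to be the same state of $\Bcal'$. An automaton whose states are profiles is not a subautomaton of $\Ecal$, and \enquote{if that fails, I fall back} leaves the first claim unproved.

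In fact the first claim cannot be proved, because it is false. Take $\alp=\set{a,b,c,d,e}$ and let $\Acal$ have states $0,x,m,y,z,s$ with initial state $0$. Its transitions are $0\cdot_\Acal a=0\cdot_\Acal b=x$, $0\cdot_\Acal c=y$, $0\cdot_\Acal d=z$, $x\cdot_\Acal e=m$, and all others go to the sink $s$. Let $\approx$ be the identity plus $m\approx y$ and $m\approx z$ (with their symmetric pairs). This is a precongruence because $m,y,z$ all go to $s$, and $y\not\approx z$. Let $\Bcal$ have classes $\set{\e}$, $\set{a}$, $\set{b}$, $\set{ae,c}$, $\set{be,d}$, and all remaining words. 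Its sets $P_b$ are $\set{0},\set{x},\set{x},\set{m,y},\set{m,z},\set{s}$, so $\Bcal$ is finer than $\approx$.

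Now let $\Bcal'$ be any complete deterministic subautomaton of $\Ecal$, and let $S(u)$ be the state reached on $u$. Then $S(u)$ contains $q_0\cdot_\Acal u$, since the initial state contains $q_0$ and each transition $(P,\ltr,P')$ satisfies $p\cdot_\Acal\ltr\in P'$ for all $p\in P$. Since $x$ is compatible only with itself, $S(a)=S(b)=\set{x}$, hence $S(ae)=S(be)$. If $\congDFA{\Bcal}$ were finer than $\congDFA{\Bcal'}$, this set would equal both $S(c)\ni y$ and $S(d)\ni z$, contradicting pairwise compatibility. This $\Bcal$ even admits no strictly coarser \DFA finer than $\approx$, so \enquote{minimal} below must mean minimal number of states.

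Your second part counts classes of the $\Bcal'$ from the first part, so it inherits the gap. What rescues the second claim is that size-minimality forces $b\mapsto P_b$ to be injective. Suppose $P_b=P_{b'}$ with $b\neq b'$ and $b'$ not initial, and redirect every transition entering $b'$ to $b$. By induction on $u$, $q_0\cdot_\Acal u\in P_r$ for the state $r$ reached on $u$ in the new automaton; whenever a redirected transition is taken, use $P_{b'}=P_b$. So the new complete \DFA is still finer than $\approx$ and has fewer reachable states, contradicting minimality. With injectivity, $\Bcal'$ is a well-defined deterministic subautomaton of $\Ecal$, and $b\mapsto P_b$ is an isomorphism from $\Bcal$ onto it, with no counting needed.
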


We can then find all the solutions to the minimal refinement problem by enumerating
the deterministic subautomata of the automaton of the subset expansion having
a number of states bounded the given integer $\rCRA$.
Thanks to the equivalence of Proposition~\ref{prop:pbRefEqPbExt} and the correspondence
between \SST and asynchronous bimachines of Theorem~\ref{thm:caracSST}, this
allows us to find the minimal number of registers needed by an \SST with a fixed
underlying automaton having register updates of the form $X \coloneqq Y \ltr$ with
$\ltr \in \alp$ or such that the domain of the function it realizes is
suffix-closed~\footnote{Note the inversion of the reading direction since the right
  automaton we seek to minimize is actually codeterministic.}.

In the general case of the extension problem for arbitrary sequential transducers,
the possible delays between the outputs produced from each pairs of states prevents
the definition of a suitable compatibility relation and, to the best of our knowledge,
the problem remains open.

\section{On the tradeoff between states and registers}%
\label{sec:tradeoff}

First, observe that in an \SST, the domain is recognized by the underlying automaton,
which is deterministic. Let $f$ be a rational function. As a consequence, the \emph{state complexity} of
$f$ is at least the finite index of the Myhill Nerode
congruence associated with $\dom(f)$.
Moreover, thanks to our correspondence with bimachines, it is at most the index of the right syntactic congruence $\rsim_f$ of $f$.

Regarding the \emph{register complexity}, we recall the result of~\cite{DavReyTal2016}, with a precision on the 
number of states of the equivalent SST:
\begin{theorem}[\cite{DavReyTal2016}]
Let $f$ be a rational function realized by a functional transducer with $N$ states, and 
$k\in \mathbb{N}$ be an integer. One can decide whether there exists an \SST with 
$k$ registers that realizes $f$. In addition, when this is the case, then one can build such an \SST, and
its number of states is in $2^{O(N^k)}$.
\end{theorem}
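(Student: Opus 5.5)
The statement to prove is the theorem of~\cite{DavReyTal2016} as restated above. It has two parts: decidability of whether $k$ registers suffice, which we take from~\cite{DavReyTal2016}, and the explicit bound $2^{O(N^k)}$ on the number of states of the constructed \SST. My plan is to revisit the construction of~\cite{DavReyTal2016} and count its states. I would not reprove the decision procedure.

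\textbf{Step 1: normal form for the input transducer.} Start from the functional transducer $\Tcal$ with $N$ states. Make it unambiguous and trim it; this increases the size at most polynomially in a way that does not affect the shape of the final bound, or can be absorbed. The characterization of~\cite{DavReyTal2016} says that $k$ registers suffice iff no pattern of $k+1$ pairwise \enquote{diverging} runs occurs. Under that hypothesis, at any input position the set of runs that are still alive splits into at most $k$ classes of runs with bounded mutual delay. Delays between two runs in the same class are bounded by some constant $D$ polynomial in $N$ and the maximal output length.

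\textbf{Step 2: build the \SST.} Use one register per class. The state must record the following:
\begin{itemize}
  \item a partition of the current subset of states of $\Tcal$ into at most $k$ classes;
  \item for each state in a class, its bounded delay word relative to the class's register content;
  \item how classes are ordered and merged when runs die or converge.
\end{itemize}
The key counting point is how the delay information is stored. Rather than storing arbitrary words per state, I would store it per tuple: for each $k$-tuple of states of $\Tcal$ (one representative per class), record a bounded-size piece of data, such as pending delays or compatibility information. This yields at most $N^k$ items, each taking a constant number of values (constant depending on $\Tcal$). The number of states is therefore $c^{N^k} = 2^{O(N^k)}$.

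\textbf{Step 3: correctness.} Show that transitions are well defined when runs of different classes merge. When that happens, keep the register with the maximal/appropriate content, as in~\cite{DavReyTal2016}, and show that the $k$-register bound is never exceeded, using the absence of the pattern.

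\textbf{Main obstacle.} The hard step is the counting in Step 2. A naive state stores, per alive state of $\Tcal$, a delay word of length up to $D$, which gives $2^{O(N\log D)}$ per partition. That alone is fine, but the class-assignment data must also be accounted for. The part that genuinely needs care is justifying that the information needed to decide future merges is indexed by $k$-tuples of states, giving the exponent $N^k$. I would first try to extract this directly from the proof in~\cite{DavReyTal2016}, where the construction goes through a subset-like construction over $k$-tuples of runs, and verify that its state space is $2^{O(N^k)}$.
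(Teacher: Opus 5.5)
The paper gives no proof of this statement. It is quoted from \cite{DavReyTal2016}, and the bound $2^{O(N^k)}$ is read off from the construction there. Deferring decidability to that reference therefore matches the paper. However, the part your proposal sets out to add, an actual count of the states, is not carried out. The decisive claim of Step~2 is that the \SST only needs a bounded piece of data per $k$-tuple of states of $\Tcal$. You never say which data this is. You also never show that it determines the register updates, including copying a register when a class of runs later diverges and discarding one when classes merge. You yourself postpone this to extracting it from \cite{DavReyTal2016}. The data you actually list (a partition of the alive states into at most $k$ classes plus one delay per state) is indexed by states, not by $k$-tuples. So nothing in Steps~1--3 produces the exponent $N^k$.

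The estimates around this step are also off.

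\emph{Naive count.} Delays are output words, not lengths. There are about $|\Gamma|^{D}$ words of length at most $D$, so assigning one delay to each of the $N$ states gives $2^{O(ND)}$ possibilities, not $2^{O(N\log D)}$. Since $D$ grows with $N$, this does not fit in $2^{O(N^k)}$. Take $k=1$ and the function $u\sigma\mapsto\sigma u$ of Section~\ref{sec:tradeoff}: a transducer with $O(N)$ states has delays of length about $N$. The naive count then only yields $2^{O(N^2)}$ instead of $2^{O(N)}$, so \enquote{that alone is fine} is false.

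\emph{Constant per tuple.} For the same reason, \enquote{a constant number of values depending on $\Tcal$} gives $c^{N^k}=2^{O(N^k)}$ only if $c$ is independent of $N$; it may depend only on $\Gamma$ and the maximal output length. Otherwise you get $2^{O(N^k\log c)}$ with $\log c$ growing with $N$.

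\emph{Step~1.} Unambiguizing a functional transducer is not polynomial in general. Already with empty outputs, an \NFA may require exponentially many states in any equivalent unambiguous automaton. That would make your bound doubly exponential in the original $N$. You do not need this step: in a trim functional transducer, two runs on the same input that reach the same state have produced the same output. To see this, extend both by a common accepting continuation and cancel on the right. Hence every alive state already carries a well-defined delay.
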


This theorem shows that reducing the number of registers may induce an
\emph{exponential blowup} of the number of states. We show, using the 
next example, that this bound is tight, by providing an example
for which an exponential blowup of the number of states occurs.

\begin{example}
Let $\alp=\{a,b\}$, $N\in \mathbb{N}$, and consider the following function:
$f:u\sigma \in \alp^{N+1} \mapsto \sigma u$, where $\sigma\in \alp$ is a letter,
and $u\in \alp^N$ is a word on $\alp$ of length $N$.
This function can be realized by an \SST with two registers $X_a,X_b$ and $O(N)$ states. It simply copies the input word $u$ in both registers, but on the first transition, it adds an $a$ in $X_a$ and
a $b$ in $X_b$. On the last transition, it records the letter to decide which register to output.
It uses $O(N)$ states to check the length of the input word. 

This function can also be realized by an \SST with a single register, but one can show that this requires to store $u$ in the states, so as to produce the whole output on the last letter of the input,
yielding an exponential number of states.
\end{example}

\section{Conclusion}

Using the equivalence with bimachines, we have obtained strong 
and efficient minimization results for $\SST_{\textsf{iffo}}$ with total domain.
On the other side, the model
of asynchronous bimachines allows to capture the full class of \SST, but is more
difficult to minimize. 
%
%
This has led us to the study of the register minimization for a fixed underlying
automaton.
Next, the converse should be investigated.
\ie minimizing the left automaton with respect to a fixed right automaton
and checking the canonicity of the asynchronous bimachine obtained
by composing the two constructions.
As future work, another challenging research direction is the state-register minimization problem
for the full class of \SST.

%
%

\newpage
\bibliographystyle{splncs04}
\bibliography{biblio}

@Book{str94,
  author =	"Howard Straubing",
  title =	"Finite Automata, Formal Logic, and Circuit
		 Complexity",
  publisher =	"Birkh{\"a}user",
  address =	"Boston, Basel and Berlin",
  year = 	"1994",
}

@Article{Bu60,
  author = 	 {B\"uchi, J. R.},
  title = 	 {Weak second-order arithmetic and finite automata},
  journal = 	 {Zeitschrift f\"ur Mathematische Logik und Grundlagen der Mathematik},
  year = 	 {1960},
  OPTkey = 	 {},
  volume = 	 {6},
  number = 	 {1--6},
  pages = 	 {66-92},
  OPTmonth = 	 {},
  OPTnote = 	 {},
  OPTannote = 	 {}
}

@inproceedings{AluCer2010,
  author    = {Alur, Rajeev and Cerný, Pavol},
  booktitle = {{{IARCS}} Annual Conference on Foundations of Software Technology and Theoretical Computer Science, {{FSTTCS}} 2010, December 15-18, 2010, Chennai, India},
  year      = {2010},
  doi       = {10.4230/LIPICS.FSTTCS.2010.1},
  editor    = {Lodaya, Kamal and Mahajan, Meena},
  pages     = {1--12},
  publisher = {Schloss Dagstuhl - Leibniz-Zentrum für Informatik},
  series    = {{{LIPIcs}}},
  title     = {Expressiveness of Streaming String Transducers},
  volume    = {8}
}

@inproceedings{AluCer2011,
  author    = {Alur, Rajeev and Cerný, Pavol},
  booktitle = {Proceedings of the 38th {{ACM SIGPLAN-SIGACT}} Symposium on Principles of Programming Languages, {{POPL}} 2011, Austin, {{TX}}, {{USA}}, January 26-28, 2011},
  year      = {2011},
  doi       = {10.1145/1926385.1926454},
  editor    = {Ball, Thomas and Sagiv, Mooly},
  pages     = {599--610},
  publisher = {ACM},
  title     = {Streaming Transducers for Algorithmic Verification of Single-Pass List-Processing Programs}
}

@inproceedings{AluDADesRagYua2013,
  author    = {Alur, Rajeev and D'Antoni, Loris and Deshmukh, Jyotirmoy V. and Raghothaman, Mukund and Yuan, Yifei},
  booktitle = {28th Annual {{ACM}}/{{IEEE}} Symposium on Logic in Computer Science, {{LICS}} 2013, New Orleans, {{LA}}, {{USA}}, June 25-28, 2013},
  year      = {2013},
  doi       = {10.1109/LICS.2013.65},
  pages     = {13--22},
  publisher = {IEEE Computer Society},
  title     = {Regular Functions and Cost Register Automata}
}

@inproceedings{BenLhoRey2024a,
  author    = {Benalioua, Yahia Idriss and Lhote, Nathan and Reynier, Pierre{-}Alain},
  booktitle = {49th International Symposium on Mathematical Foundations of Computer Science, {{MFCS}} 2024, August 26-30, 2024, Bratislava, Slovakia},
  year      = {2024},
  doi       = {10.4230/LIPICS.MFCS.2024.23},
  editor    = {Kr{\'{a}}lovic, Rastislav and Kucera, Anton{\'{\i}}n},
  pages     = {23:1--23:15},
  publisher = {Schloss Dagstuhl - Leibniz-Zentrum für Informatik},
  series    = {{{LIPIcs}}},
  title     = {Minimizing Cost Register Automata over a Field},
  volume    = {306}
}

@article{
  BerstelB79,
  author = {Jean Berstel and Luc Boasson},
  title = {Transductions and context-free languages},
  journal = {Ed. Teubner},
  year = {1979},
  pages = {1--278}
}

@book{Ber1979,
  author    = {Berstel, Jean},
  year      = {1979},
  isbn      = {3-519-02340-7},
  publisher = {Teubner},
  series    = {Teubner Studienbücher : {{Informatik}}},
  title     = {Transductions and Context-Free Languages},
  url       = {https://www.worldcat.org/oclc/06364613},
  volume    = {38}
}

@inproceedings{DavReyTal2016,
  author    = {Daviaud, Laure and Reynier, Pierre-Alain and Talbot, Jean-Marc},
  booktitle = {Proceedings of the 31st Annual {{ACM}}/{{IEEE}} Symposium on Logic in Computer Science, {{LICS}} '16, New York, {{NY}}, {{USA}}, July 5-8, 2016},
  year      = {2016},
  doi       = {10.1145/2933575.2934549},
  editor    = {Grohe, Martin and Koskinen, Eric and Shankar, Natarajan},
  pages     = {857--866},
  publisher = {ACM},
  title     = {A Generalised Twinning Property for Minimisation of Cost Register Automata}
}

@book{Eil1974,
  author    = {Eilenberg, Samuel},
  year      = {1974},
  isbn      = {0-12-234001-9},
  publisher = {Academic Press},
  series    = {Pure and Applied Mathematics},
  title     = {Automata, Languages, and Machines. {{Volume A}}},
  url       = {https://www.worldcat.org/oclc/310535248},
  volume    = {A}
}

@Article{tra61short,
  author =	"Boris Avraamovich Trakhtenbrot",
  title =	"Finite automata and logic of monadic predicates (in
		 {R}ussian)",
  journal =	"Dokl.\ Akad.\ Nauk SSSR",
  year = 	"1961",
  volume =	"140",
  pages =	"326--329",
}

@Article{Elg61,
  author = 	 {Elgot, C. C.},
  title = 	 {Decision Problems of Finite Automata Design and Related Arithmetics},
  journal = 	 {In Transactions of the American Mathematical Society},
  year = 	 {1961},
  OPTkey = 	 {},
  volume = 	 {98},
  number = 	 {1},
  pages = 	 {21--51},
  OPTmonth = 	 {},
  OPTnote = 	 {},
  OPTannote = 	 {}
}

@article{FilGauLho2019,
  author       = {Filiot, Emmanuel and Gauwin, Olivier and Lhote, Nathan},
  year         = {2019},
  doi          = {10.23638/LMCS-15(4:16)2019},
  journal = {Logical Methods in Computer Science},
  number       = {4},
  shortjournal = {Log. Methods Comput. Sci.},
  title        = {Logical and Algebraic Characterizations of Rational Transductions},
  volume       = {15}
}

@inproceedings{FilKriTri2014,
  author    = {Filiot, Emmanuel and Krishna, Shankara Narayanan and Trivedi, Ashutosh},
  booktitle = {34th International Conference on Foundation of Software Technology and Theoretical Computer Science, {{FSTTCS}} 2014, December 15-17, 2014, New Delhi, India},
  year      = {2014},
  doi       = {10.4230/LIPICS.FSTTCS.2014.147},
  editor    = {Raman, Venkatesh and Suresh, S. P.},
  pages     = {147--159},
  publisher = {Schloss Dagstuhl - Leibniz-Zentrum für Informatik},
  series    = {{{LIPIcs}}},
  title     = {First-Order Definable String Transformations},
  volume    = {29}
}

@article{FilRey2016,
  author       = {Filiot, Emmanuel and Reynier, Pierre-Alain},
  year         = {2016},
  doi          = {10.1145/2984450.2984453},
  journal = {ACM SIGLOG News},
  number       = {3},
  pages        = {4--19},
  title        = {Transducers, Logic and Algebra for Functions of Finite Words},
  volume       = {3}
}

@article{FilRey2021,
  author       = {Filiot, Emmanuel and Reynier, Pierre-Alain},
  year         = {2021},
  doi          = {10.3233/FI-2021-1998},
  journal = {Fundam. Informaticae},
  number       = {1--2},
  pages        = {59--76},
  title        = {Copyful Streaming String Transducers},
  volume       = {178}
}

@article{Pfl1973,
  author       = {Pfleeger, Charles P.},
  year         = {1973},
  doi          = {10.1109/T-C.1973.223655},
  journal = {IEEE Trans. Computers},
  number       = {12},
  pages        = {1099--1102},
  title        = {State Reduction in Incompletely Specified Finite-State Machines},
  volume       = {22}
}

@article{ReuMer1986,
  author       = {Reusch, Bernd and Merzenich, Wolfgang},
  year         = {1986},
  doi          = {10.1007/BF00263650},
  journal = {Acta Informatica},
  number       = {6},
  pages        = {663--678},
  title        = {Minimal Coverings for Incompletely Specified Sequential Machines},
  volume       = {22}
}

@article{ReuSch1991,
  author       = {Reutenauer, Christophe and Schützenberger, Marcel Paul},
  year         = {1991},
  doi          = {10.1137/0220042},
  journal = {Siam Journal On Computing},
  number       = {4},
  pages        = {669--685},
  shortjournal = {SIAM J. Comput.},
  title        = {Minimization of Rational Word Functions},
  volume       = {20}
}

@article{Sch1961b,
  author       = {Schützenberger, Marcel Paul},
  year         = {1961},
  doi          = {10.1016/S0019-9958(61)80006-5},
  journal = {Inf. Control.},
  number       = {2--3},
  pages        = {185--196},
  title        = {A Remark on Finite Transducers},
  volume       = {4}
}

\newpage
\appendix

\section{Examples}
\subsection{Example of a bimachine}\label{apx:ex-bimachine}
\begin{figure}
  \centering
\subfloat[Left automaton $\Lcal$]{\begin{tikzpicture}[auto, initial text=]
  \node[state, accepting right, initial] (v0) {$\lIni$};
  \node[state, accepting right] (v1) [below right= of v0] {$l_b$};
  \node[state, accepting right] (v2) [above right= of v0] {$l_a$};
  \path[->] (v0) edge node[swap] {$b$} (v1);
  \path[->] (v0) edge node {$a$} (v2);
  \path[->] (v1) edge[loop below] node {$a, b$} ();
  \path[->] (v2) edge[loop above] node {$a, b$} ();
\end{tikzpicture}\label{fig:Bim_swap_left}}
\hfil
\subfloat[Right automaton $\Rcal$]{\begin{tikzpicture}[auto, initial text=]
  \node[state, accepting right, initial] (v0) {$\rFin$};
  \node[state, initial] (v1) [below left= of v0] {$r_b$};
  \node[state, initial] (v2) [above left= of v0] {$r_a$};
  \path[->] (v1) edge node[swap] {$b$} (v0);
  \path[->] (v1) edge[loop below] node {$a, b$} ();
  \path[->] (v2) edge node {$a$} (v0);
  \path[->] (v2) edge[loop above] node {$a, b$} ();
\end{tikzpicture}\label{fig:Bim_swap_right}}
    \caption{A bimachine}\label{fig:Bim_swap}
\end{figure}
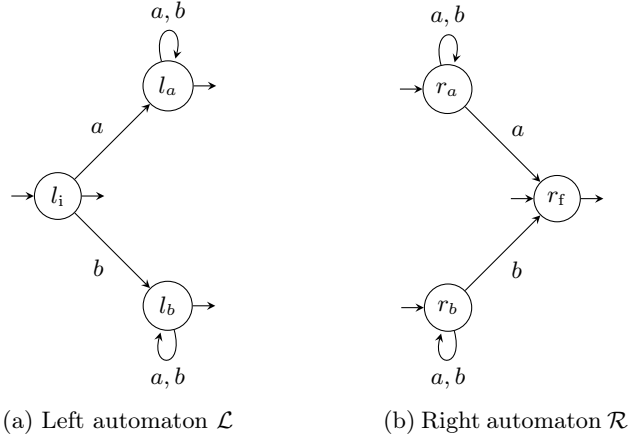

\begin{example}\label{ex:bimSwap}

  Let $\alp = \alpp = \set{a,b}$ and consider the bimachine
  $\Bcal = (\Lcal, \Rcal, \lOutFctBim, \outFctBim, \rOutFctBim)$
  whose automata are depicted on Figure~\ref{fig:Bim_swap} and whose output
  functions are defined, for all $l \in L$, $r \in R$ and $\ltr \in \alp$, by
  $\lOutFctBim(r) = \rOutFctBim(l) = \e$, $\outFctBim(\lIni, \ltr, r_\tau)
    = \outFctBim(l_\tau, \ltr, \rFin) = \tau$ for all $\tau \in \alp$ and
  $\outFctBim(l, \ltr, r) = \ltr$ in the other cases.

  It realizes a rational function swapping the first and the last letters of its
  input word.

  The left automaton remembers the first letter of the word until it reaches the
  end of the word and allows $\outFctBim$ to output it instead of the last letter
  while, symmetrically, the right automaton remembers the last letter and allows
  its output at the beginning.

  An example of a run of $\Bcal$ on the word $abab \in \wds$ is depicted on
  Figure~\ref{fig:Bim_swap_run}.

  \tikzFig{Bim_swap_run}{A run of a bimachine}
\end{example}

\subsection{Example of an asynchronous bimachine}\label{apx:ex-async}

\begin{example}
  The bimachine of Example~\ref{ex:bimSwap} viewed as an asynchronous bimachine
  is defined by $\Bcal = (\Lcal, \Rcal, \lOutFctBim, \outFctBim, \rOutFctBim)$
  where $\Lcal$ is the \DFA of Figure~\ref{fig:Bim_swap_left},
  $\Rcal = (R, I_\Rcal, \delta_\Rcal, \rFin)$ is defined by
  $R = I_\Rcal = \set{\rFin, r_a, r_b}$ and, for all $\ltr, \tau \in \alp$ and
  $l \in L$, $(l,\ltr) \cdot_\Rcal \rFin = r_\ltr$ and $(l,\ltr) \cdot_\Rcal r_\tau = r_\tau$.

  Its outputs are defined, for all $l \in L$, $r \in R$ and $\ltr \in \alp$,
  by $\lOutFctBim(r) = \rOutFctBim(l) = \e$, $\outFctBim\big(r_\tau, (\lIni, \ltr), r_\tau\big)
    = \outFctBim\big(r_\ltr, (l_\tau, \ltr), \rFin\big) = \tau$ for all
  $\tau \in \alp$ and $\outFctBim(r', (l, \ltr), r) = \ltr$ in the other cases,
  where $r' = (l, \ltr)\cdot_\Rcal r$.

  An example of a run of $\Bcal$ on the same word as in Figure~\ref{fig:Bim_swap_run}
  is depicted on Figure~\ref{fig:Bim_swap_run_async}.

  \tikzFig{Bim_swap_run_async}{}
\end{example}

\section{Precisions on minimality}\label{apx:bimachines-min}

As noted in Remark~\ref{rmk:domains}, there is a subtle difference between the 
definition of bimachines of~\cite{ReuSch1991} (which we use here) and the one of~\cite{FilGauLho2019}
regarding the domains of partial functions.
The assumptions of~\cite{FilGauLho2019} can be made without affecting the 
expressiveness of the model and leads to useful properties such as the existence 
of a finite family of minimal bimachines.
However, we avoid it here since forcing the right automaton to recognize the
domain of the function can lead to an unnecessary blowup in the number of registers
of the corresponding \SST.

The minimality of the canonical automata can be restated in our setting as follows:
\begin{proposition}\label{prop:canAutMinBim}
  Let $\Bcal = (\Lcal, \Rcal, \lOutFctBim, \outFctBim, \rOutFctBim)$ be a bimachine
  and let $f = \fctBim{\Bcal}$.

  If $\lanNFA{\Rcal} = \dom(f)$ then $\Rcal$ is finer than $\lsim_f$
  and, symmetrically,
  if $\lanNFA{\Lcal} = \dom(f)$ then $\Lcal$ is finer than $\rsim_f$.
\end{proposition}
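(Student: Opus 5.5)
We prove the second claim, that $\lanNFA{\Lcal} = \dom(f)$ implies $\Lcal$ finer than $\rsim_f$; the first is obtained by mirroring. Fix $u,v \in \wds$ with $u \congDFA{\Lcal} v$, \ie $l := \lIni \cdot_\Lcal u = \lIni \cdot_\Lcal v$. We must show the two conditions in the (mirrored) Definition~\ref{def:synConBim}: equality of right residuals $u^{-1}\dom(f) = v^{-1}\dom(f)$, and a uniform bound on the suffix-distance between $f(uw)$ and $f(vw)$ over $w \in u^{-1}\dom(f)$.

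For the domain condition we use $\lanNFA{\Lcal} = \dom(f)$. Since $\Lcal$ is deterministic and both $u$ and $v$ lead to $l$, we have $uw \in \lanNFA{\Lcal}$ iff $l \cdot_\Lcal w \in F_\Lcal$ iff $vw \in \lanNFA{\Lcal}$. This gives the first condition immediately. This is exactly where the hypothesis is needed: without it, $\rOutFctBim$ could be defined on a state reached by $uw$ while $uw$ falls outside the domain because of $\Rcal$ or $\lOutFctBim$.

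For the distance condition, take $w \in u^{-1}\dom(f)$ and set $r := w \cdot_\Rcal \rFin$. Since $vw \in \dom(f)$ as well, $r$ is defined. Apply the decomposition identity $\outFctBim(\lIni, xw, \rFin) = \outFctBim(\lIni, x, r)\, \outFctBim(l, w, \rFin)$ with $x \in \{u,v\}$; this uses $\lIni \cdot_\Lcal x = l$ for both choices. We obtain
\[
f(xw) = \lOutFctBim(xw \cdot_\Rcal \rFin)\, \outFctBim(\lIni, x, r)\; \outFctBim(l, w, \rFin)\, \rOutFctBim(l \cdot_\Lcal w).
\]
The final factor $s_w := \outFctBim(l, w, \rFin)\, \rOutFctBim(l \cdot_\Lcal w)$ is the same for $x = u$ and $x = v$. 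Hence $f(uw) = p_u s_w$ and $f(vw) = p_v s_w$, where $p_x := \lOutFctBim(x \cdot_\Rcal r)\,\outFctBim(\lIni, x, r)$ depends only on $x$ and on the state $r \in R$. The suffix-distance is therefore at most $\lenWd{p_u} + \lenWd{p_v}$. As $r$ ranges over the finite set $R$, this is bounded by $\max_{r \in R} (\lenWd{p_u(r)} + \lenWd{p_v(r)})$, which is finite and independent of $w$. So $u \rsim_f v$.

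The only delicate point is the bookkeeping of partiality. We need that $r$ and the relevant $\lOutFctBim$ values are defined whenever $uw, vw \in \dom(f)$, and that they are defined by the same criterion for $u$ and for $v$. This follows from the fact that $f(xw)$ is defined exactly when all of its factors are. Once that is settled, the first claim follows by applying the symmetric argument to $\Rcal$, $\lsim_f$ and the prefix distance.
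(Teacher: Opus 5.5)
Your proof is correct. It differs from the paper only in that the paper gives no proof of this proposition at all: it presents the statement as a restatement, in the paper's setting, of the minimality of the canonical automata from Reutenauer--Sch\"utzenberger and Filiot--Gauwin--Lhote, and relies on those references. Your argument supplies a self-contained direct justification.

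For the left automaton, determinism of $\Lcal$ together with $\lanNFA{\Lcal}=\dom(f)$ gives the residual condition immediately. The factorization identity for $\outFctBim$ then splits $f(xw)$ into two parts. The first is $\lOutFctBim(x\cdot_\Rcal r)\,\outFctBim(\lIni,x,r)$, which depends only on $x$ and on the state $r=w\cdot_\Rcal\rFin$ ranging over the finite set $R$. The second is a suffix common to $x=u$ and $x=v$. This bounds the suffix distance uniformly. The mirrored argument, with $\lIni\cdot_\Lcal w$ ranging over $L$ and the prefix distance, handles $\Rcal$.

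What the direct route buys is transparency about where the hypothesis enters. The distance bound holds for every bimachine on $u^{-1}\dom(f)\cap v^{-1}\dom(f)$. The assumption $\lanNFA{\Lcal}=\dom(f)$ is used only to make the two residuals coincide.

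Two small points. First, if $\Lcal$ is not complete, you should also treat the case where $\lIni\cdot_\Lcal u$ and $\lIni\cdot_\Lcal v$ are both undefined. There $u^{-1}\dom(f)=v^{-1}\dom(f)=\emptyset$, so $u\rsim_f v$ holds vacuously. Second, $\lOutFctBim$ is total on $I_\Rcal$. So the only way $uw$ can fall outside the domain while being accepted by $\Lcal$ is through $\Rcal$; mentioning $\lOutFctBim$ separately in your remark is redundant.
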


In other words, Proposition~\ref{prop:canAutMinBim} states that the canonical
right (\resp left) automaton of a rational transduction $f$ is minimal among
all the right (\resp left) automata of the bimachines realizing $f$ with a
right (\resp left) automaton recognizing $\dom(f)$.

\begin{remark}
  Note that, in the case of total functions, both definitions of bimachines coincide,
  which ensures the correctness of the results of Section~\ref{sec:bimachine}.
\end{remark}

\section{Proofs of Section~\ref{sec:bimachine}}\label{apx:flow_indep}

\subsection{From streaming string transducers to bimachines}
\begin{proposition}\label{prop:SSTIndepToBim}
  Given an \SST with $\sCRA$ states and $\rCRA$ registers having independent flows
  and a fixed output register, we can define a bimachine realizing the same function
  having a left automaton with $\sCRA$ states and a right automaton with $\rCRA$ states.
\end{proposition}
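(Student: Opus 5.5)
Let $\Scal = (Q, \varSet, q_0, v_0, \sttTrSST, \regUpSST, \outFctSST)$ have independent flows and fixed output register $\regOut$. The left automaton will be the underlying automaton of $\Scal$, and the right automaton will be read off from the flow structure.

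\textbf{Left automaton.} Take $\Lcal = (Q, q_0, \sttTrSST, \dom(\outFctSST))$, which has $\sCRA$ states.

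\textbf{Right automaton.} Let $R = \varSet$ and $\rFin = \regOut$. By independence of flows, for every $X \in \varSet$ and $\ltr \in \alp$ there is a unique register $Y$ such that $\regUpSST(q,\ltr)(X) \in Y\wds[\alpp]$ for all $q$. Set $\ltr \cdot_\Rcal X = Y$. This is a function $R \times \alp \to R$ read right to left, so $\Rcal$ is codeterministic, and it has $\rCRA$ states. Take $I_\Rcal = \varSet$. Then for $w = a_1\cdots a_n$ the run of $\Rcal$ from $\rFin$ is $X_n = \regOut$, $X_{j-1} = a_j \cdot_\Rcal X_j$. By construction this is exactly the backward chain of registers whose content ends up in $\regOut$ after reading $w$.

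\textbf{Outputs.} Define
\begin{itemize}
\item $\lOutFctBim(X) = v_0(X)$;
\item $\outFctBim(q, \ltr, X) = u$ where $\regUpSST(q,\ltr)(X) = (\ltr\cdot_\Rcal X)\, u$;
\item $\rOutFctBim(q) = u$ where $\outFctSST(q) = \regOut\, u$.
\end{itemize}

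\textbf{Correctness.} The key step is the following invariant, proved by induction on $j$. Let $(q_j, v_j)$ be the run of $\Scal$ on $w$ and $X_j$ the right run above. Then
\[ v_j(X_j) = \lOutFctBim(X_0)\, \prod_{i=1}^{j} \outFctBim(q_{i-1}, a_i, X_i). \]
The inductive step uses $v_j(X_j) = v_{j-1}(\regUpSST(q_{j-1},a_j)(X_j)) = v_{j-1}(X_{j-1})\,\outFctBim(q_{j-1},a_j,X_j)$. Taking $j = n$ and appending $\rOutFctBim(q_n)$ gives $\fctSST{\Scal}(w) = \fctBim{\Bcal}(w)$.

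Here one checks that the bimachine convention $\outFctBim(l, uv, r)$, with the right state taken after reading the suffix, matches the indexing: the letter $a_i$ is paired with $q_{i-1}$ on the left and $X_i$ on the right. For a codeterministic automaton, the state $X_i$ at position $i$ is the one reached from $\rFin$ on the suffix $a_{i+1}\cdots a_n$.

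\textbf{Domains.} The domain of $\fctSST{\Scal}$ is $\lanNFA{\Lcal}$. The right automaton accepts every word, since every state is initial and it is co-complete. So $\fctBim{\Bcal}$ is defined exactly when $q_n \in F_\Lcal$, which matches $\Scal$. This is consistent with Definition~\ref{def:bim}, which does not require the right automaton to recognize the domain.

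The main obstacle is only the careful alignment of indices between the bimachine's extension of $\outFctBim$ to words and the SST run. Independence of flows is exactly what makes $\cdot_\Rcal$ well defined without reference to the state $q$.
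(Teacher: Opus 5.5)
Your proposal is correct and follows the paper's proof essentially verbatim: the same left automaton, the same flow automaton on $\varSet$ with all states initial and $\regOut$ final, the same output functions, and an induction along the run. The only difference is cosmetic: you prove the invariant along the fixed backward chain $X_n=\regOut$, $X_{j-1}=a_j\cdot_\Rcal X_j$ in product form, whereas the paper proves $v_n(X)=\lOutFctBim(w\cdot_\Rcal X)\,\outFctBim(q_0,w,X)$ for every register $X$ by induction on $|w|$, and the two are equivalent.
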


The idea is that the assumptions on $\Scal$ allow us to define an automaton
describing its flows: its set of states is the set of registers of $\Scal$ and,
for all $\ltr \in \alp$, it has a transition from each register to the register
it flows to on the register updates of $\Scal$ corresponding to the letter $\ltr$.
This automaton happens to be codeterministic, choosing the fixed output register
of $\Scal$ as its final state, and can be used as the right automaton of a
bimachine together with the underlying automaton of $\Scal$ as its left
automaton.

\begin{proof}
  Let $\Scal = (Q, \varSet, q_0, v_0, \sttTrSST, \regUpSST, \outFctSST)$ be an
  \SST with independent flows and a fixed output register $\regOut \in \varSet$.

  We define a bimachine $\Bcal = (\Lcal, \Rcal, \lOutFctBim, \outFctBim, \rOutFctBim)$
  equivalent to $\Scal$ as follows:
  \begin{itemize}
    \item $\Lcal = (Q, q_0, \delta_\Lcal, \dom(\outFctSST))$ where, for all
          $q \in Q$ and $\ltr \in \alp$, $q \cdot_\Lcal \ltr = \sttTrSST(q,\ltr)$
          is the underlying automaton of $\Scal$.
    \item $\Rcal = (\varSet, \varSet, \delta_\Rcal, \regOut)$ where, for all
          $X \in \varSet$ and $\ltr \in \alp$ with $Y \in \varSet$ such that
          $\regUpSST (q,\ltr)(X) = Yu$ for some $q \in Q$ and $u \in \wds[\alpp]$,
          $\ltr \cdot_\Rcal X = Y$.
    \item For all $X \in \varSet$, we define $\lOutFctBim(X) = v_0(X)$.
    \item For all $q \in Q$, $\ltr \in \alp$ and $X \in \varSet$ with $\regUpSST
            (q,\ltr)(X) = Yu$ for some $u \in \wds[\alpp]$ (by definition $Y =
            \ltr \cdot_\Rcal X$), we define $\outFctBim(q,\ltr,X) = u$.
    \item For all $q \in \dom(\outFctSST)$ with $\outFctSST(q) = \regOut u$
          for some $u \in \wds[\alpp]$, we define $\rOutFctBim(X) = u$.
  \end{itemize}

  It is straightforward to check that $\dom(\fctBim{\Bcal}) = \lanNFA{\Lcal} =
    \dom(\fctSST{\Scal})$.
  Moreover, we can show by induction that, for all $X \in \varSet$ and for all
  $w \in \dom(\fctSST{\Scal})$ where $(q_i,v_i)_{i \in \intInterv{0}{n}}$ is the
  unique execution of $\Scal$ on $w$,
  \[
    v_n(X) = \lOutFctBim(w \cdot_\Rcal X)\, \outFctBim(q_0, w, X)
  \]

  Indeed, for $w = \e$, $v_0(X) = \lOutFctBim(X) = \lOutFctBim(\e \cdot_\Rcal X)\,
    \outFctBim(q_0, \e, X)$ by definition, and assuming the equality holds for
  all words of length $n \in \Nbb$, let $w \in \wds$ such that $\lenWd{w}
    = n$, let $\ltr \in \alp$ and let $(q_i,v_i)_{i \in \intInterv{0}{n+1}}$ be the
  unique execution of $\Scal$ on $w\ltr$ with $Y \in \varSet$ and $u \in \wds[\alpp]$
  such that $\regUpSST(q_n, \ltr)(X) = Yu$.
  By definition, $q_n = q_0 \cdot_\Lcal w$, $Y = \ltr \cdot_\Rcal X$ and $u =
    \outFctBim(q_n, \ltr, X)$.
  Thus,
  \begin{align*}
    v_{n+1}(X) & = v_n(Y)\, u                                                       \\
               & = \lOutFctBim(w \cdot_\Rcal Y)\, \outFctBim(q_0, w, Y)\,
    \outFctBim(q_n, \ltr, X)                                                        \\
               & = \lOutFctBim(w \ltr \cdot_\Rcal X)\,
    \outFctBim(q_0, w, \ltr \cdot_\Rcal X)\, \outFctBim(q_0 \cdot_\Lcal w, \ltr, X) \\
    v_{n+1}(X) & = \lOutFctBim(w\ltr \cdot_\Rcal X)\, \outFctBim(q_0, w\ltr, X)
  \end{align*}

  Let $\outFctSST(q_n)= \regOut u$ with $u \in \wds[\alpp]$.
  By definition, $q_n = q_0 \cdot_\Lcal w$ and $u = \rOutFctBim(\lIni \cdot_\Lcal w)$.
  Thus, the equivalence of $\Scal$ and $\Bcal$ follows immediately
  \begin{align*}
    \fctSST{\Scal}(w) & = v_n (\regOut)\, u                       \\
                      & = \lOutFctBim(w \cdot_\Rcal \regOut)\,
    \outFctBim (q_0, w, \regOut)\, \rOutFctBim(q_0 \cdot_\Lcal w) \\
    \fctSST{\Scal}(w) & = \fctBim{\Bcal}(w)
  \end{align*}
\end{proof}

\subsection{From bimachines to streaming string transducers}
For the converse, we can essentially reverse the construction of the proof of
Proposition~\ref{prop:SSTIndepToBim}.
However, there is a slight break of symmetry since bimachines are allowed to
recognize the domain of the function they realize either using their left or
right automaton but \SST can only recognize it using their underlying automaton.
Thus, we obtain the following:

\begin{proposition}\label{prop:BimToSSTIndep}
  Given a bimachine $\Bcal$ having a left automaton with $\sCRA$ states and a
  right automaton with $\rCRA$ states, we can define an \SST $\Scal$ with
  $\sCRA$ states and $\rCRA$ registers such that $\dom(\fctBim{\Bcal}) \subseteq
    \dom(\fctSST{\Scal})$ and $\fctSST{\Scal}(w) = \fctBim{\Bcal}(w)$
  for all $w \in \dom(\fctBim{\Bcal})$.

  In particular, $\Scal$ is equivalent to $\Bcal$ when the realized transduction
  is a total function.

  Moreover, $\Scal$ has independent flows and a fixed output register.
\end{proposition}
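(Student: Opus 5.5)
We reverse the construction of Proposition~\ref{prop:SSTIndepToBim}. Let $\Bcal = (\Lcal, \Rcal, \lOutFctBim, \outFctBim, \rOutFctBim)$ with $\Lcal = (L, \lIni, \delta_\Lcal, F_\Lcal)$ and $\Rcal = (R, I_\Rcal, \delta_\Rcal, \rFin)$. Before defining $\Scal$, we complete $\Rcal$ if needed. We add no state. For $\ltr \in \alp$ and $r \in R$ with $\ltr \cdot_\Rcal r$ undefined, we set $\ltr \cdot_\Rcal r = r'$ for an arbitrary $r' \in R$, and we extend $\lOutFctBim$ arbitrarily, say by $\e$, to all of $R$. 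Such redirected transitions are never used along runs ending in $\rFin$ on words of $\dom(\fctBim{\Bcal})$, since codeterminism makes the run from $\rFin$ unique. So the value of $\Bcal$ on its domain is unchanged. Completing $\Rcal$ in this way is exactly why we only get $\dom(\fctBim{\Bcal}) \subseteq \dom(\fctSST{\Scal})$.

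We then set $\Scal = (L, R, \lIni, v_0, \sttTrSST, \regUpSST, \outFctSST)$ as follows.
\begin{itemize}
\item The states are those of $\Lcal$, with $\sttTrSST(l,\ltr) = l \cdot_\Lcal \ltr$. This is possibly partial; if $\Scal$ must have a total transition function, the undefined targets are redirected to an existing state and the domain is controlled through $\outFctSST$.
\item The registers are the states of $\Rcal$, with $v_0(r) = \lOutFctBim(r)$.
\item The updates are $\regUpSST(l,\ltr)(r) = (\ltr \cdot_\Rcal r)\, \outFctBim(l,\ltr,r)$, with $\outFctBim$ extended arbitrarily where undefined.
\item The output function is $\outFctSST(l) = \rFin\, \rOutFctBim(l)$ for $l \in F_\Lcal$.
\end{itemize}
By construction the flow of $r$ on $\ltr$ is $\ltr \cdot_\Rcal r$, which is independent of $l$. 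So $\Scal$ has independent flows, and its fixed output register is $\rFin$. It has $\sCRA$ states and $\rCRA$ registers.

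The core of the proof is the invariant that after reading $w$ from $\lIni$, the valuation satisfies
\[ v_n(r) = \lOutFctBim(w \cdot_\Rcal r)\, \outFctBim(\lIni, w, r) \]
for every $r \in R$. We prove it by induction on $|w|$, using exactly the computation of Proposition~\ref{prop:SSTIndepToBim} read backwards. The key step is $v_{n+1}(r) = v_n(\ltr\cdot_\Rcal r)\,\outFctBim(\lIni\cdot_\Lcal w,\ltr,r)$ combined with the identity $\outFctBim(l,w\ltr,r)=\outFctBim(l,w,\ltr\cdot_\Rcal r)\,\outFctBim(l\cdot_\Lcal w,\ltr,r)$. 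Applying the invariant with $r = \rFin$ at a final state gives $\fctSST{\Scal}(w) = \fctBim{\Bcal}(w)$ for all $w \in \dom(\fctBim{\Bcal})$.

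The domain inclusion follows because $w \in \dom(\fctBim{\Bcal})$ requires $\lIni \cdot_\Lcal w \in F_\Lcal$, which is exactly $\dom(\outFctSST)$. When the function is total, both domains are $\wds$ and the two machines are equivalent.

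The main obstacle is the domain bookkeeping. We must make sure that the arbitrary completion of $\Rcal$ and of $\outFctBim$ never affects values on $\dom(\fctBim{\Bcal})$. We handle this by noting that on such words the whole run of $\Rcal$ from $\rFin$ uses only original transitions. This holds because $\Bcal$ is defined on $w$ exactly when that run exists and reaches $I_\Rcal$, and all the $\outFctBim$ values along it are defined.
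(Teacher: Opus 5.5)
Your proposal is correct and is the paper's construction: states and transitions of $\Lcal$, registers $R$, $v_0 = \lOutFctBim$ padded with $\e$, updates $r \coloneqq (\ltr\cdot_\Rcal r)\,\outFctBim(l,\ltr,r)$, output $\rFin\,\rOutFctBim(l)$, and the same invariant proved by induction. The only difference is how missing right transitions are filled: the paper uses the identity update $r \coloneqq r$, which amounts to your codeterministic completion with the missing $\ltr$-predecessor of $r$ taken to be $r$ itself, and your uniform completion makes the invariant hold for all prefixes and makes the flows visibly independent of $l$, which the paper's sketch glosses over.
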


\begin{proof}
  Let $\Bcal = (\Lcal, \Rcal, \lOutFctBim, \outFctBim, \rOutFctBim)$ be a bimachine
  with $\Lcal = (L, \lIni, \delta_\Lcal, F_\Lcal)$ and
  $\Rcal = (R, I_\Rcal, \delta_\Rcal, \rFin)$.
  We define an \SST
  $\Scal = (Q, \varSet, q_0, v_0, \sttTrSST, \regUpSST, \outFctSST)$ as follows.

  The underlying automaton of $\Scal$ is $\Lcal$.
  \ie $Q = L$, $q_0 = \lIni$, $\dom(\outFctSST) = F_\Lcal$ and, for all
  $l \in L$ and $\ltr \in \alp$, $\sttTrSST(l,\ltr) = l \cdot_\Lcal \ltr$.

  For the registers and their updates, $\varSet = R$ and
  \begin{itemize}
    \item For all $r \in R$, $v_0$ is defined by $v_0(r) = \lOutFctBim(r)$ if
          $r \in \dom(\lOutFctBim)$ and $v_0(r) = \e$ otherwise\footnote{This
            choice is arbitrary as the initial value of these registers will be
            irrelevant during the execution of S on words of $\dom(\fctBim{\Bcal})$
            but the definition of \SST requires from all the registers to always
            have a value.
            This is the first source of the discrepancy between the domains of
            $\fctSST{\Scal}$ and $\fctBim{\Bcal}$.}.
    \item For all $l \in L$, $\ltr \in \alp$ and $r \in R$, $\regUpSST$ is defined
          by $\regUpSST(l,\ltr)(r) = r'u$ where $r' = \ltr \cdot_\Rcal r$ and
          $u = \outFctBim (l, \ltr, r)$ if $l \cdot_\Lcal \ltr$ and
          $\ltr \cdot_\Rcal r$ are defined and $\regUpSST(l,\ltr)(r)= r$
          otherwise\footnote{This choice is also arbitrary
            but required by the definition of \SST (see the previous footnote).
            It is the second source of the discrepancy between the domains of
            $\fctSST{\Scal}$ and $\fctBim{\Bcal}$.}.
    \item For all $l \in F_\Lcal$, $\outFctSST(l) = \rFin\, \rOutFctBim(l)$.
  \end{itemize}

  Note that $\Scal$ has a fixed output register $\rFin$ and independent flows since
  they only depend on the transitions of $\Rcal$ and not on $\Lcal$.

  We can easily check that $\dom(\fctBim{\Bcal}) = \lanNFA{\Lcal} \cap \lanNFA{\Rcal}
    \subseteq \lanNFA{\Lcal} = \dom(\fctSST{\Scal})$ and prove by induction, as in
  the proof of Proposition~\ref{prop:SSTIndepToBim}, that for all $r \in R$ and
  $w \in \dom(\fctBim{\Bcal})$ where $(q_i,v_i)_{i \in \intInterv{0}{n}}$ is the
  unique execution of $\Scal$ on $w$,
  \[
    v_n(r) = \lOutFctBim(w \cdot_\Rcal r)\, \outFctBim(q_0, w, r)
  \]

  Thus, from the same arguments, $\fctSST{\Scal}(w) = \fctBim{\Bcal}(w)$ for all
  $w \in \dom(\fctBim{\Bcal})$.
\end{proof}

\section{Complements on asynchronous bimachines}\label{apx:async}

Let $\Bcal = (\Lcal, \Rcal, \lOutFctBim, \outFctBim, \rOutFctBim)$ be an asynchronous
bimachine as in Section~\ref{sec:asynchronous}.
Let us give a formal definition of the sequential transducer 
$\Tcal_\Lcal = (L, \iniOFct_\Lcal, \oFct_\Lcal, \finOFct_\Lcal)$
and the cosequential transducer 
$\Tcal_\Rcal = (R, \iniOFct_\Rcal, \oFct_\Rcal, \finOFct_\Rcal)$ such that 
$\fctABim{\Bcal} = \fctTrn{\Tcal_\Rcal} \circ \fctTrn{\Tcal_\Lcal}$.

Let $\alp' = \alp \cup \set{\lhd}$, the additional symbol $\lhd$ will serve as an
\emph{end of word} marker\footnote{It is merely a technical detail to allow
  $\Tcal_\Rcal$ to take $\rOutFctBim$ into account while remaining cosequential.
  It can safely be ignored when $\rOutFctBim(q) = \e$ for all $q \in F_\Lcal$.
}.

$\Tcal_\Lcal$ is sequential and realizes a function $\fctTrn{\Tcal_\Lcal} \fct{\wds}
  {\wds[(L \times \alp')]}$ labeling the input by the states of $\Lcal$.
It is defined by
\begin{itemize}
  \item $\iniOFct_\Lcal (\lIni) = \e$
  \item $\oFct_\Lcal (l, \ltr, l \cdot_\Lcal \ltr) = (l, \ltr)$, for all $l \in L$
        and $\ltr \in \alp$ such that $l \cdot_\Lcal \ltr$ is defined
  \item $\finOFct_\Lcal(l) = (l, \lhd)$ for all $l \in F_\Lcal$
\end{itemize}

$\Tcal_\Rcal$ is cosequential and reads the labeled words to realize the transduction.
It is defined by
\begin{itemize}
  \item $\iniOFct_\Rcal (r) = \lOutFctBim(r)$ for all $r \in I_\Rcal$
  \item $\oFct_\Rcal \big(r',\, (l, \ltr),\, r \big) =
          \outFctBim \big(r',\, (l, \ltr),\, r \big)$, where $r' = (l, \ltr) \cdot_\Rcal r$
        for all $l \in L$, $\ltr \in \alp$ and $r \in R$ such that $l \cdot_\Lcal \ltr$
        and $(l, \ltr) \cdot_\Rcal r$ are defined, and
        $\oFct_\Rcal \big(\rFin,\, (l, \lhd),\, \rFin \big) = \rOutFctBim (l)$
        for all $l \in F_\Lcal$.
  \item $\finOFct_\Rcal (\rFin) = \e$
\end{itemize}

A simple induction on the length of the input word shows that $\fctABim{\Bcal}
  = \fctTrn{\Tcal_\Rcal} \circ \fctTrn{\Tcal_\Lcal}$

Without loss of generality, and for the purposes of this section\footnote{
  \ie to obtain the equality of the domains in the correspondence we will show
  with \SST.
}, we will always consider all the states of $\Rcal$
to be initial (\ie $I_\Rcal = R$) and require that, for all $l \in L$, $r \in R$
and $\ltr \in \alp$, a state $(l, \ltr) \cdot_\Rcal r$ of $\Rcal$ exists
whenever a state $l \cdot_\Lcal \ltr$ of $\Lcal$ exists.
Consequently, $\Lcal$ will be tasked with recognizing the domain of $\fctABim{\Bcal}$
since these requirements ensure that if $w \in \lanNFA{\Lcal}$ then
$w_\Lcal \in \lanNFA{\Rcal}$.

Equivalently, viewing $\fctABim{\Bcal}$ as the composition of a function realized
by a sequential transducer $\Tcal_\Lcal$ and a cosequential transducer $\Tcal_\Rcal$
as described above, this assumption can be restated as requiring
$\fctTrn{\Tcal_\Lcal} (\wds) \subseteq \dom (\fctTrn{\Tcal_\Rcal})$.

\subsection{Proof of Theorem~\ref{thm:caracSST}} 

\begin{proof}
  Given an \SST $\Scal = (Q, \varSet, q_0, v_0, \sttTrSST, \regUpSST, \outFctSST)$
  with a fixed output register\footnote{Remember that this assumption can be made
    without loss of generality if no conditions on the flows are imposed.}
  $\regOut \in \varSet$ define an asynchronous bimachine
  $\Bcal = (\Lcal, \Rcal, \lOutFctBim, \outFctBim, \rOutFctBim)$ as follows:
  \begin{itemize}
    \item $\Lcal = (Q, q_0, \delta_\Lcal, \dom(\outFctSST))$ where, for all
          $q \in Q$ and $\ltr \in \alp$, $q \cdot_\Lcal \ltr = \sttTrSST(q,\ltr)$
          is the underlying automaton of $\Scal$.
    \item $\Rcal = (\varSet, \varSet, \delta_\Rcal, \regOut)$ where, for all
          $X \in \varSet$ and $(q, \ltr) \in Q \times \alp$ with $Y \in \varSet$
          such that $\regUpSST (q,\ltr)(X) = Yu$ for some $u \in \wds[\alpp]$,
          $(q, \ltr) \cdot_\Rcal X = Y$.
    \item For all $X \in \varSet$, we define $\lOutFctBim(X) = v_0(X)$.
    \item For all $(q, \ltr) \in Q \times \alp$ and $X \in \varSet$ with $\regUpSST
            (q,\ltr)(X) = Yu$ for some $u \in \wds[\alpp]$ (by definition $Y =
            (q, \ltr) \cdot_\Rcal X$), we define $\outFctBim\big(Y,(q, \ltr),X\big)
            = u$.
    \item For all $q \in \dom(\outFctSST)$ with $\outFctSST(q) = \regOut u$
          for some $u \in \wds[\alpp]$, we define $\rOutFctBim(X) = u$.
  \end{itemize}

  It is straightforward to check that $\dom(\fctBim{\Bcal}) = \lanNFA{\Lcal} =
    \dom(\fctSST{\Scal})$.

  For all $w = a_1 \dots a_n \in \dom(\fctSST{\Scal})$,
  let $(q_i,v_i)_{i \in \intInterv{0}{n}}$ be the unique execution of $\Scal$ on
  $w$.
  Note that $(q_i)_{i \in \intInterv{0}{n}}$ is also, by definition, the unique
  execution of $\Lcal$ on $w$.
  For all $X \in \varSet$ and for all $i \in \intInterv{1}{n}$, let $X_n = X$ and
  $X_{i-1} =  (q_{i-1}, a_i) \cdot_\Rcal X_i$.
  We can show by induction on the length of $w$ that
  \[
    v_n(X) = \lOutFctBim(X_0) \prod_{i=1}^{n} \outFctBim(X_{i-1}, (q_{i-1}, a_i), X_i)
  \]

  Indeed, for $w = \e$, $v_0(X) = \lOutFctBim(X)$ by definition, and assuming
  the equality holds for all words of length $n \in \Nbb$, let $w = a_1 \dots a_n
    \in \wds$, let $a_{n+1} \in \alp$, let $(q_i,v_i)_{i \in \intInterv{0}{n+1}}$
  be the unique execution of $\Scal$ on $wa_{n+1}$ with $Y \in \varSet$ and
  $u \in \wds[\alpp]$ such that $\regUpSST(q_n, a_{n+1})(X) = Yu$ and
  define the sequence $(X_i)_{i \in \intInterv{0}{n+1}}$ as above with $X_{n+1} = X$
  and $X_{i-1} =  (q_{i-1}, a_i) \cdot_\Rcal X_i$ for all $i \in \intInterv{1}{n+1}$.
  Note that, by definition, $X_n = Y$ and $u = \outFctBim(X_n, (q_n, a_{n+1}), X_{n+1})$.
  Thus,
  \begin{align*}
    v_{n+1}(X) & = v_n(Y)\, u                            \\
               & = \Big(\lOutFctBim(X_0) \prod_{i=1}^{n}
    \outFctBim(X_{i-1}, (q_{i-1}, a_i), X_i)\Big)\,
    \outFctBim(X_n, (q_n, a_{n+1}), X_{n+1})             \\
    v_{n+1}(X) & = \lOutFctBim(X_0) \prod_{i=1}^{n+1}
    \outFctBim(X_{i-1}, (q_{i-1}, a_i), X_i)
  \end{align*}

  Let $\outFctSST(q_n)= \regOut u$ with $u \in \wds[\alpp]$.
  By definition, $u = \rOutFctBim(q_n)$ and the sequence $(X_i)_{i \in \intInterv{0}{n}}$
  with $X_n = \regOut$ and $X_{i-1} =  (q_{i-1}, a_i) \cdot_\Rcal X_i$ for all
  $i \in \intInterv{1}{n}$ is the unique execution of $\Rcal$ on $w$.
  Thus, the equivalence of $\Scal$ and $\Bcal$ follows immediately
  \begin{align*}
    \fctSST{\Scal}(w) & = v_n (\regOut)\, u \\
                      & = \lOutFctBim(X_0)\
    \prod_{i=1}^{n} \outFctBim\big( X_{i-1},\, (q_{i-1},a_i) \,, X_i \big)\
    \rOutFctBim(q_n)                        \\
    \fctSST{\Scal}(w) & = \fctBim{\Bcal}(w)
  \end{align*}

  Conversely, given an asynchronous bimachine
  $\Bcal = (\Lcal, \Rcal, \lOutFctBim, \outFctBim, \rOutFctBim)$
  with $\Lcal = (L, \lIni, \delta_\Lcal, F_\Lcal)$ and
  $\Rcal = (R, I_\Rcal, \delta_\Rcal, \rFin)$,
  let us define
  $\Scal = (Q, \varSet, q_0, v_0, \sttTrSST, \regUpSST, \outFctSST)$,
  an equivalent \SST, as follows.

  The underlying automaton of $\Scal$ is $\Lcal$.
  \ie $Q = L$, $q_0 = \lIni$, $\dom(\outFctSST) = F_\Lcal$ and, for all
  $l \in L$ and $\ltr \in \alp$, $\sttTrSST(l,\ltr) = l \cdot_\Lcal \ltr$.

  For the registers and their updates, $\varSet = R$ and
  \begin{itemize}
    \item For all $r \in R$, $v_0$ is defined by $v_0(r) = \lOutFctBim(r)$.
    \item For all $l \in L$, $\ltr \in \alp$ and $r \in R$, $\regUpSST$ is defined
          by $\regUpSST(l,\ltr)(r) = r'u$ where $r' = (l, \ltr) \cdot_\Rcal r$ and
          $u = \outFctBim (r', (l, \ltr), r)$.
    \item For all $l \in F_\Lcal$, $\outFctSST(l) = \rFin\, \rOutFctBim(l)$.
  \end{itemize}

  By the assumptions we made on asynchronous bimachines,
  $\dom(\fctBim{\Bcal}) = \lanNFA{\Lcal} = \dom(\fctSST{\Scal})$
  and we can prove by induction, as in the previous case, that for all $r \in R$
  and $w = a_1 \dots a_n \in \dom(\fctBim{\Bcal})$ where
  $(q_i,v_i)_{i \in \intInterv{0}{n}}$ is the unique execution of $\Scal$ on $w$
  and the sequence $(r_i)_{i \in \intInterv{0}{n}}$ is defined by $r_n = r$
  and $r_{i-1} = (q_{i-1}, a_i) \cdot_\Rcal r_i$ for all $i \in \intInterv{1}{n}$,
  \[
    v_n(r) = \lOutFctBim(r_0) \prod_{i=1}^{n} \outFctBim(r_{i-1}, (q_{i-1}, a_i), r_i)
  \]

  Thus, from the same arguments, $\fctSST{\Scal} = \fctBim{\Bcal}$.
\end{proof}

\subsection{Proof of Proposition~\ref{prop:NPhard}}
\begin{proof}
In this proof, we aim to show that, given as input an \SST $\Scal$ with
partial updates, and an integer $k$, determining whether there exists
an equivalent \SST $\Scal'$ (with total updates) with $k$ registers,
whose underlying automaton is the same as the one of $\Scal$ is \np-complete.

Let $\Scal = (Q, \varSet, q_0, v_0, \sttTrSST, \regUpSST, \outFctSST)$
be an \SST with partial updates. We denote by $\Acal$
its underlying \DFA.

We thus have to show that the problem is in \np and is \np-hard.

\paragraph{Membership in \np}
For membership in \np, we guess the \SST $\Scal'$, which
is of polynomial size. Then we have to check the equivalence.
This requires checking the equivalence of the domains, and of the
functions realized. Assuming the domains are equal, then equivalence
of the functions amounts to functionality checking, which is decidable
in \ptime.

Checking equivalence of the domains requires
some care, as $\Scal$ has partial updates. 
We have to prove that for every word $w\in \lanNFA{\Acal}$, the unique
run of $\Scal$ on $w$ is well-defined, meaning that the register
used in the final output function is defined in the associated valuation.
We claim that the domain of $\fctTrn{\Scal}$ is strictly included in 
$\lanNFA{\Acal}$ if, and only if, there exists a pair 
$(q,X)$ composed of a state $q$ and a register $X$ such that:
\begin{itemize}
\item $X$ is live in $q$, \ie there exists an input word $u$ from $q$
to some state $s$ such that $X$ flows along this execution to some register $Y$
and the output in $s$ is based on $Y$
\item there exists an input word $v$ such that the unique run from the initial
state on $v$ reaches state $q$ and the resulting configuration is such that
$X$ is undefined in the resulting valuation
\end{itemize}
It remains to explain how to decide these two properties.
The first point is just a problem of reachability, hence decidable in \ptime.
The second point is itself equivalent to the existence of a path as follows:
$$
q_0 \xrightarrow{v_1}q_1 \xrightarrow{\delta_1} q_2 \xrightarrow{v_2}q
$$
where $\delta_1$ is a transition with partial update, for which the update
of some register $X_1$ is not defined. Hence, the valuation is not defined
for $X_1$ after executing this transition. Then, $X_1$ flows to $X$
along the execution $q_2 \xrightarrow{v_2}q$.
Indeed, the only way for a variable to be undefined is from
a partial update.
Now, we can consider the flow graph, whose vertices are pairs 
composed of a state and a register. Using some classical trimming techniques,
we can consider all the transitions with partial updates that are reachable, and detect this way vertices of the flow graph which can be undefined this way. Then, using some saturation technique, which is in \ptime, we can deduce the set of 
vertices of the flow graph which can be undefined, solving the second point.

Hence, equivalence of the domains can also be checked in polynomial time.

\paragraph{\np-hardness}
We will show that we can adapt a reduction proposed in~\cite{Pfl1973}
from the graph coloring problem.
Let $G=(V,E)$ be a non-oriented graph, and some integer $k$.
We define an \SST $\Scal$ as follows. We define equivalently an asynchronous bimachine. Its left automaton has a single state, and its domain is universal.
The right automaton mimicks the construction presented in~\cite{Pfl1973}.
This construction defines a deterministic automaton, hence we mirror it
to build a codeterministic automaton.

Formally, it is defined as follows:
\begin{itemize}
\item Its set of states is $V\cup \{S_0,S_N,S_F\}$.
\item The alphabet is the set of vertices $V$.
\item $S_0$ is the unique final state, and all states are initial.
\item The transitions are defined as follows:
\begin{itemize}
\item for each vertex $v$, there is a transition $v\xrightarrow{v}S_0$, and a transition
$S_F\xrightarrow{v}v$,
\item for each pair of vertices $v,v'$, if $(v,v')\in E$, then we add a transition
$S_N \xrightarrow{v}v'$
\end{itemize}
\end{itemize}

The outputs are defined so as to distinguish between states $S_N$
and $S_F$.

Following the lines of the proof of~\cite{Pfl1973}, we can prove that the graph is $k$ 
colorable iff 
$\Scal$ can be realized with $k+3$ registers and the same underlying 
automaton. Indeed, the left automaton being trivial, using the correspondence
with asynchronous bimachines, the existence of an \SST $\Scal'$
 with the same underlying automaton is equivalent to that of a right automaton
 with $k+3$ states, which corresponds to being able to merge
 states in $V$ into $k$ classes.
\end{proof}





\subsection{Proof of Proposition~\ref{prop:pbRefEqPbExt}}
\begin{proof}
  For the first reduction, let $\rCRA \in \Nbb$ and let $\Tcal$ be a sequential
  letter-to-letter transducer (\resp a sequential transducer with $\dom(\fctTrn{\Tcal})$
  prefix-closed) with an underlying automaton $\Acal = (Q, q_0, \delta, F)$.

  Define by induction the \emph{production function}
  $\outSeqTr[\Tcal] \fct{Q \times \wds}{\wds[\alpp]}$ of $\Tcal$, for all $q \in Q$,
  by $q \outSeqTr[\Tcal] \e = \e$ and
  $q \outSeqTr[\Tcal] w \ltr = (q \outSeqTr[\Tcal] w)
    \oFct(q \cdot_\Acal w,\, \ltr,\, q \cdot_\Acal w\ltr)$
  for all $w \in \wds$ and $\ltr \in \alp$.
  Given a state $q$ and a word $w$, $q \outSeqTr[\Tcal] w$ is the output produced by $\Tcal$
  during its unique run on $w$ starting in $q$.

  Consider the instance of the minimal refinement problem where the precongruence
  on $\Acal$ given as an input is defined by
  \begin{equation}\label{eq:relCompat}
    p \approx q \text{ if and only if } p \outSeqTr[\Tcal] w = q \outSeqTr[\Tcal] w \text{ for all }
    w \in \wds
  \end{equation}

  If there exists a transducer $\Tcal'$ solution to the extension problem,
  then its underlying automaton $\Acal' = (Q', q_0', \delta', F')$ is a
  solution to the minimal refinement problem.

  Indeed, $\Acal'$ has at most $\rCRA$ states.
  Moreover, let $u,v \in \wds$ such that  $u \congDFA{\Acal'} v$
  (\ie $q_0' \cdot_{\Acal'} u = q_0' \cdot_{\Acal'} v$).
  For all $w \in \wds$, $\fctTrn{\Tcal}(uw) = \fctTrn{\Tcal'}(uw)$ and, since
  $\Tcal$ and $\Tcal'$ are letter-to-letter (\resp since $\fctTrn{\Tcal}$ is
  closed by prefix $u \in \dom(\fctTrn{\Tcal}) \subseteq \dom(\fctTrn{\Tcal'})$
  and thus $\fctTrn{\Tcal}(u) = \fctTrn{\Tcal'}(u)$), we have the equality
  \[
    (q_0 \cdot_\Acal u) \outSeqTr[\Tcal] w = (q_0 \cdot_{\Acal'} u) \outSeqTr[\Tcal'] w
  \]

  Of course, the same is also true for $v$ and thus
  \[
    (q_0 \cdot_\Acal u) \outSeqTr[\Tcal] w = (q_0 \cdot_{\Acal'} u) \outSeqTr[\Tcal'] w
    = (q_0 \cdot_{\Acal'} v) \outSeqTr[\Tcal'] w = (q_0 \cdot_\Acal v) \outSeqTr[\Tcal] w
  \]
  \ie $\congDFA{\Acal'}$ is finer than $\approx$.

  Conversely, if $\Acal' = (Q', q_0', \delta', F')$ is a solution
  to this instance of the minimal refinement problem then there exists a transducer
  $\Tcal' = (Q', \iniOFct', \oFct', \finOFct')$, with $\Acal'$ as its underlying
  automaton, solution to the extension problem.
  It is defined by
  \begin{itemize}
    \item $\iniOFct'(q_0') = \iniOFct(q_0)$.
    \item $\oFct'$ is defined, for all $u \in \wds$ and $\ltr \in \alp$, by
          $(q_0' \cdot_{\Acal'} u) \outSeqTr[\Tcal'] \ltr = (q_0 \cdot_\Acal u) \outSeqTr[\Tcal] \ltr$
          if $(q_0 \cdot_\Acal u) \outSeqTr[\Tcal] \ltr$ is defined and
          $(q_0' \cdot_{\Acal'} u) \outSeqTr[\Tcal'] \ltr = \ltr$ otherwise.
          It is well-defined since if $q_0' \cdot_{\Acal'} u = q_0' \cdot_{\Acal'} v$
          for some $ v\in \wds$ then $u \congDFA{\Acal'} v$ thus $u \approx v$ which
          means that $(q_0 \cdot_\Acal u) \outSeqTr[\Tcal] \ltr = (q_0 \cdot_\Acal v) \outSeqTr[\Tcal] \ltr$.
    \item $\finOFct'$ is defined, for all $u \in \wds$, by
          $\finOFct'(q_0' \cdot_{\Acal'} u) = \finOFct'(q_0 \cdot_\Acal u)$
          if $q_0 \cdot_\Acal u \in F$ and $\finOFct'(q_0' \cdot_{\Acal'} u) = \e$
          otherwise.
          For the same reasons, it is well-defined.
  \end{itemize}

  Clearly, for all $w \in \dom(\fctTrn{\Tcal})$, $\fctTrn{\Tcal'}(w) = \fctTrn{\Tcal}(w)$
  and $\Tcal'$ is indeed a sequential transducer with at most $\rCRA$ states
  realizing a total function.
  It is also letter-to-letter whenever $\Tcal$ is.

  For the converse reduction, let $\approx$ be a precongruence on the states
  of a \DFA $\Acal = (Q, q_0, \delta, F)$ and consider the instance of the
  extension problem where the transducer $\Tcal = (Q, \iniOFct, \oFct, \finOFct)$
  given as an input is defined as follows.

  For all $p,q \in Q$ such that $p \not\approx q$ define a letter
  $\ltr_{p,q} = \ltr_{q,p} \notin \alp$.
  Let $\alp' = \alp \cup \setWCnd{\ltr_{p,q}}{p,q \in Q, p \not\approx q}$ and
  let $\alpp = \alp \cup Q$.

  The transducer $\Tcal$ will realize a function $\fctTrn{\Tcal}
    \fct{\wds[\alp']}{\wds[\alpp]}$.
  Its underlying automaton will be $\Acal$ with additional self loops on the states
  $p,q \in Q$ such that $p \not\approx q$ on the letter $\ltr_{p,q}$.
  The output of the self loop on $p$ will be $p$ and the one on $q$ will be $q$,
  ensuring that no transducer realizing an extension of $\fctTrn{\Tcal}$ can
  merge the incompatible states $p$ and $q$ since they produce different outputs
  after reading the same letter $\ltr_{p,q}$.

  More formally, $\Tcal$ is defined by $\iniOFct(q_0) = \finOFct (q) = \e$
  for all $q \in F$ and $\oFct(p,\ltr,q) = \ltr$ for all $q = \delta (p, \ltr)$
  where $p \in Q$ and $\ltr \in \alp$ and $\oFct(p, \ltr_{p,q}, p) = p$ for all
  $p \in Q$ such that there exists $q \in Q$ such that $p \not\approx q$.

  Clearly, for all $p, q \in Q$, if $p \approx q$ then $p \outSeqTr[\Tcal] w = q \outSeqTr[\Tcal] w$
  for all $w \in \wds[\alp']$ and if $p \not\approx q$ then $ p \outSeqTr[\Tcal] \ltr_{p,q}
    \neq q \outSeqTr[\Tcal] \ltr_{p,q}$.
  Thus, the relation on the states of $\Tcal$ defined by~\eqref{eq:relCompat}
  is $\approx$ and the proof that the considered instance of the minimal refinement
  problem admits a solution if and only if this instance of the extension problem
  admits one is the same as in the previous reduction.
\end{proof}

\subsection{Proof of Theorem~\ref{thm:refinement}}
\begin{proof}
  Let $\Acal = (Q_\Acal, q_0, \delta_\Acal, F_\Acal)$, let $\approx$ and
  let $\Ecal = (Q_\Ecal, I_\Ecal, \Delta_\Ecal, F_\Ecal)$ be as the hypotheses
  of the theorem.

  Let $\Bcal = (Q_\Bcal, p_0, \delta_\Bcal, F_\Bcal)$
  be a \DFA finer than $\approx$ and define $\varphi \fct{Q_\Bcal}{Q_\Ecal}$,
  for all $p \in Q_\Bcal$, by $\varphi (p) = \setWCnd{q_0 \cdot_\Acal u}
    {u \in \wds \text{ such that } p_0\cdot_\Bcal u = p}$.
  It is well-defined since $\Bcal$ is finer than $\approx$.

  Define the \DFA $\Bcal' = (Q_{\Bcal'}, I_{\Bcal'}, \Delta_{\Bcal'}, F_{\Bcal'})$
  where
  \begin{itemize}
    \item $Q_{\Bcal'} = \varphi(Q_\Bcal)$
    \item $I_{\Bcal'} = \set{\varphi(p_0)}$
    \item for all $p \in Q_\Bcal$ and $\ltr \in \alp$,
          $\varphi(p) \cdot_{\Bcal'} \ltr = \varphi(p \cdot_\Bcal \ltr)$
    \item $F_{\Bcal'} = \setWCnd{\varphi(q)}{q \in F_\Bcal}$
  \end{itemize}

  Observe that $\congDFA{\Bcal}$ is finer than $\congDFA{\Bcal'}$ since
  $\varphi(p_0) \cdot_{\Bcal'} w = \varphi(p_0 \cdot_\Bcal w)$ for all $w \in \wds$.
  Note also that $\Bcal'$ is indeed a subautomaton of $\Ecal$ since $\varphi(p_0) \in I_\Ecal$
  because $q_0 \cdot_\Acal \e = q_0 \in \varphi(p_0)$ and, for all $p \in Q_\Bcal$,
  $\ltr \in \alp$ and $q \in \varphi(p)$, if $u \in \wds$ is such that
  $q_0 \cdot_\Acal u = q$ and $p_0 \cdot_\Bcal u = p$ then $p_0 \cdot_\Bcal u\ltr
    = p \cdot_\Bcal \ltr$ and $q_0 \cdot_\Acal u\ltr \in \varphi(p \cdot_\Bcal \ltr)$
  and thus $\big(\varphi(p), \ltr,  \varphi(p \cdot_\Bcal \ltr)\big) \in \Delta_\Ecal$.

  Moreover, $\card{Q_{\Bcal'}} \leq \card{Q_\Bcal}$.
  Consequently, in the case where $\Bcal$ is minimal, since $\Bcal'$ is also finer
  than $\approx$, $\card{Q_{\Bcal'}} = \card{Q_\Bcal}$ and $\varphi$ is a bijection.
\end{proof}

\begin{example}
  Let $\Acal$ be the \DFA depicted on Figure~\ref{fig:DFA_precong} and let
  $\approx$ be the precongruence on its states defined by $0 \approx 1$ and
  $0 \approx 2$ represented by dashed lines in the figure.

  \begin{figure}
    \savebox{\largest}{\tikzset{parallel/.style={auto=right,->,
      to path={ let \p1=(\tikztostart),\p2=(\tikztotarget),
          \n1={atan2(\y2-\y1,\x2-\x1)},\n2={\n1+180}
          in ($(\tikztostart.{\n1})!1mm!270:(\tikztotarget.{\n2})$) --
          ($(\tikztotarget.{\n2})!1mm!90:(\tikztostart.{\n1})$) \tikztonodes}}}

\scalebox{0.8}{
\begin{tikzpicture}[auto]
  \node[state, initial, accepting right] at (0,0) (0) {$\set{0}$};
  \node[state, accepting above] (1) at (3,2) {$\set{1}$};
  \node[state, accepting below] (2) at (3,-2) {$\set{2}$};
  \node[state, initial above, accepting right] (3) at (6.5,2) {$\set{0,2}$};
  \node[state, initial below, accepting right] (4) at (6.5,-2) {$\set{0,1}$};

  \draw[->]
  (0) edge[parallel] node {$a$} (1)
  (1) edge[parallel] node {$a$} (0)
  (0) edge[parallel] node {$b$} (2)
  (2) edge[parallel] node {$b$} (0)
  (1) edge[parallel] node {$b$} (2)
  (2) edge[parallel] node {$a$} (1)
  (1) edge[parallel] node {$a,b$} (3)
  (3) edge[parallel] node {$a$} (1)
  (2) edge[parallel] node {$a,b$} (4)
  (4) edge[parallel] node {$b$} (2)
  (3) edge[parallel] node {$a$} (4)
  (4) edge[parallel] node {$b$} (3)
  (1) edge node[near end, swap] {$a$} (4)
  (2) edge node[near end] {$b$} (3)
  (0) edge[bend left=60] node {$b$} (3)
  (0) edge[bend right=60] node [swap]{$a$} (4)
  (3) edge[loop right] node {$b$} (3)
  (4) edge[loop right] node {$a$} (4)
  ;

\end{tikzpicture}
}}
    \centering
    \mbox{}\hfill
    \subfloat[Automaton $\Acal$]{
      \raisebox{\dimexpr.5\ht\largest-.5\height}{
        \tikzset{parallel/.style={auto=right,->,
      to path={ let \p1=(\tikztostart),\p2=(\tikztotarget),
          \n1={atan2(\y2-\y1,\x2-\x1)},\n2={\n1+180}
          in ($(\tikztostart.{\n1})!1mm!270:(\tikztotarget.{\n2})$) --
          ($(\tikztotarget.{\n2})!1mm!90:(\tikztostart.{\n1})$) \tikztonodes}}}

\begin{tikzpicture}[auto, node distance=2cm]
  \node[state, initial, accepting right]  (0) {$0$};
  \node[state, accepting above] (1) [above right = of 0]{$1$};
  \node[state, accepting below] (2) [below right = of 0] {$2$};

  \draw[->]
  (0) edge[parallel] node {$a$} (1)
  (1) edge[parallel] node {$a$} (0)
  (0) edge[parallel] node {$b$} (2)
  (2) edge[parallel] node {$b$} (0)
  (1) edge[parallel] node {$b$} (2)
  (2) edge[parallel] node {$a$} (1)
  ;

  \draw[dashed]
  (0) edge[bend left = 50, dashed] (1)
  (0) edge[bend right = 50, dashed] (2)
  ;
\end{tikzpicture}}\label{fig:DFA_precong}
    }\hfill
    \subfloat[Automaton $\Ecal$]{
      \usebox{\largest}\label{fig:NFA_sub_ext}
    }\hfill\mbox{}\caption{}\label{fig:precong}
  \end{figure}

  The \NFA $\Ecal$ depicted on Figure~\ref{fig:NFA_sub_ext} is the automaton of the
  subset expansion of $\Acal$ relative to $\approx$.
  Its subautomaton corresponding to the states $\set{0,1}$ and $\set{0,2}$ is a
  complete \DFA respecting the compatibility relation imposed by $\approx$.
  It has fewer states than $\Acal$ and is, in fact, minimal among the \DFA finer
  than $\approx$.
  However, it is not unique as \eg the subautomaton of $\Ecal$ corresponding to
  the states $\set{2}$ and $\set{0,1}$ verifies the same property and has the
  same number of states.
\end{example}

\end{document}